\documentclass[USenglish]{article}\usepackage{amsmath,amsthm,amssymb,fullpage,graphicx,url,xspace,cite,quotes,algorithm2e}
\usepackage[hidelinks]{hyperref}
\newtheorem{theorem}{Theorem}[section]\newtheorem{lemma}[theorem]{Lemma}\newtheorem{corollary}[theorem]{Corollary}

\newcommand{\W}{\ensuremath W\xspace}

\newcommand{\wvps}{weakly visible polygons\xspace}
\newcommand{\wvp}{weakly visible polygon\xspace}
\newtheorem{definition}{Definition}

\newcommand{\comment}[1]{{}}\newcommand{\old}[1]{{}}

\title{An Overview of Minimum Convex Cover and Maximum Hidden Set}
\author{Reilly Browne}

\date{January 2024}

\begin{document}

\maketitle

\begin{abstract}
    We give a review of results on the minimum convex cover and maximum hidden set problems. In addition, we give some new results. First we show that it is NP-hard to determine whether a polygon has the same convex cover number as its hidden set number. We then give some important examples in which these quantities don't always coincide. Finally, We present some consequences of insights from Browne, Kasthurirangan, Mitchell and Polishchuk \cite{Browne2023} on other classes of simple polygons.
\end{abstract}

\section{Preliminaries}

We study two optimization problems within polygon visibility, specifically the \textit{minimum convex cover problem} and the \textit{maximum hidden set problem}.
The \textit{minimum convex cover problem} asks for the smallest cardinality set $B$ of convex polygons whose union is equal to some target polygon $P$. We will refer to the convex polygons as ``convex pieces'' so that the distinction between the covering polygons and the target polygon is clear. The \textit{maximum hidden set problem} is the largest cardinality set $H$ of points within a target polygon $P$ such that no two points $p,q \in H$ can see each other. Here, two points $p,q$ see each other if and only if the line segment $pq$ is contained entirely inside of the target polygon $P$.

Shermer\cite{shermer_1989} introduced the notion of a point visibility graph, an infinite graph for which the vertices correspond to the points of a polygon and edges exist between vertices if the corresponding points see each other. This lets us redefine the minimum convex cover problem and the maximum hidden set problem in terms of graph theory, namely with respect to the clique cover and independent set problems

\begin{definition}\label{definition1}[\cite{shermer_1989}]
Given a polygon $P$, the \textbf{point-visibility graph of P}, $PVG(P) = (V, E)$ where $V = \{p \; | \; p \in P\}$ and $E = \{(x, y) \; | \; \overline{xy} \subset P\}$.
\end{definition}

\begin{definition}\label{definition4}
For a polygon P, the \textbf{hidden set number of P}, hs(P), is the independence number of PVG(P).
\end{definition}

\begin{definition}\label{definition5}
For a polygon P, the \textbf{convex cover number of P}, cc(P), is the minimum number of convex pieces needed to cover P. \text{If $P$ is simple, then} it is also the clique covering number of PVG(P). 
\end{definition}

From these graph theory definitions, it becomes immediately obvious that the following inequality holds:
$$hs(P) \leq cc(P)$$

This inequality holds for polygons with holes as well, where the graph theoretical properties of convex cover do not hold up. It is not always the case in polygons with holes that cliques in the PVG correspond to convex pieces in the polygonal domain, specifically when the convex hull of the points corresponding to the clique vertices contains a hole. This is important for the correctness of Abrahamsen, Meyling, and Nusser's \cite{abrahamsen_et_al:LIPIcs.SoCG.2023.66} heuristic for minimum convex cover in polygons with holes. 

\section{Hardness Results}

For simple polygons, finding the minimum convex cover is known to be APX-hard \cite{EidenbenzConvexCover} i.e. there is no PTAS unless $P=NP$. This is also the case for maximum hidden set \cite{EIDENBENZHidden}. In addition to hardness of approximation, minimum convex cover has also been show to be $\exists \mathbb{R}$-hard by Abrahamsen \cite{Abrahamsen2022CoveringPI}. When the problems are generalized to polygons with holes, we do not know any additional hardness results for minimum convex cover, but maximum hidden set becomes more difficult to approximate. Eidenbenz\cite{EIDENBENZHidden} shows that there unless $P=NP$, then there exists an $\epsilon > 0$ such that there is no polynomial time approximation that achieves a better factor than $O(n^{\epsilon})$.

\subsection{Determining the Gap}
It is also NP-hard to determine the difference $cc(P) - hs(P)$. We define a \textit{homestead polygon} to be a polygon for which $cc(P) = hs(P)$. Therefore, it suffices to show the decision problem of determining whether or not a polygon is a homestead to be NP-hard. 

The key component in the two NP-hardness reductions \cite{shermer_1989} \cite{culbreck} (both from 3-SAT, the variant of the Boolean satisfiability problem where all clauses have 3 literals) for hidden set and convex cover in simple polygons is the construction of a simple polygon which is a homestead if the 3-SAT instance is satisfiable and a nonhomestead if it is not satisfiable. 

\begin{theorem}
Deciding if a simple polygon is a homestead polygon or not is NP-hard.
\end{theorem}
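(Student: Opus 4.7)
The plan is to reduce 3-SAT directly to the homestead decision problem by recycling the polygons produced by the two NP-hardness reductions already cited in the preceding paragraph. Both the reduction of Shermer \cite{shermer_1989} for hidden set and that of Culberson \cite{culbreck} for convex cover associate to each 3-SAT instance $\phi$ a simple polygon $P_\phi$ of polynomial size, together with a target integer $k$, in such a way that the gadgets exhibit an explicit hidden set of cardinality $k$, while a convex cover of cardinality $k$ exists if and only if $\phi$ is satisfiable. Because $hs(P_\phi) \leq cc(P_\phi)$ always, and the gadgets guarantee $hs(P_\phi) \geq k$, it follows that $cc(P_\phi) = hs(P_\phi) = k$ when $\phi$ is satisfiable, and $cc(P_\phi) > k \geq hs(P_\phi)$ otherwise.

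The reduction then takes a single step. Given $\phi$, construct $P_\phi$ in polynomial time using either of the two cited reductions, and ask the homestead oracle whether $P_\phi$ is a homestead. Return ``satisfiable'' exactly when the oracle returns ``yes''. Correctness follows from the equivalence in the previous paragraph; polynomial running time follows from the polynomial size of $P_\phi$. Since 3-SAT is NP-hard, so is the homestead decision problem.

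The only substantive content is verifying that the lower bound of $k$ on $hs(P_\phi)$ and the upper bound of $k$ on $cc(P_\phi)$ in fact coincide in the published reductions, so that the two gap inequalities degenerate to equality precisely on satisfiable instances. I would do this by walking through the variable, clause, and literal gadgets of Shermer's construction and checking that the $k$ hidden points obtained by placing one witness in each gadget are mutually invisible, while any convex cover must include one piece per gadget and that these pieces can be merged into a $k$-cover exactly when a consistent truth assignment is available. This gadget-by-gadget verification is the main obstacle; once it is recorded, the theorem is immediate.
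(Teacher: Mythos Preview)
Your high-level strategy---recycle an existing 3-SAT reduction and argue that the constructed polygon is a homestead exactly when the instance is satisfiable---is the same as the paper's. But the specific claim you make about those reductions is backwards, and the logic you draw from it does not close.

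In Shermer's reduction it is the \emph{hidden set number} that tracks satisfiability: $hs(P_\phi)=k$ when $\phi$ is satisfiable and $hs(P_\phi)<k$ otherwise. There is no explicit size-$k$ hidden set that works unconditionally; if there were, Shermer's reduction would prove nothing. So your premise ``the gadgets exhibit an explicit hidden set of cardinality $k$'' is false for Shermer's construction. Your stated conclusion ``$cc(P_\phi)>k\ge hs(P_\phi)$ otherwise'' already contradicts that premise, and if you instead try to run the argument through Culberson's convex-cover reduction, the unsatisfiable case still does not go through: from $hs(P_\phi)\ge k$ and $cc(P_\phi)>k$ alone you cannot deduce $hs(P_\phi)<cc(P_\phi)$, so you cannot conclude the polygon is a nonhomestead.

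What the paper actually proves is the dual of your plan. It takes Shermer's polygon and shows that $cc(P_\phi)=k$ \emph{unconditionally}. Shermer already gave a size-$k$ cover, so the missing piece is the lower bound $cc(P_\phi)\ge k$; the paper obtains it by exhibiting an explicit finite induced subgraph of $PVG(P_\phi)$ whose complement has chromatic number $k$, via a gadget-by-gadget coloring argument over the literal units, consistency checkers, clause units, and the outer box. With $cc(P_\phi)=k$ always and Shermer's $hs(P_\phi)=k$ iff $\phi$ is satisfiable, the homestead equivalence follows. The nontrivial content is precisely this convex-cover lower bound; your ``gadget-by-gadget verification'' would need to establish that bound (on $cc$, not on $hs$), and as written your proposal is aimed at the wrong quantity.
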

\begin{proof}

It is sufficient to show that for any instance of 3-SAT, the corresponding construction from Shermer's \cite{shermer_1989} NP-hardness reduction for hidden set has exactly the convex cover number $k$ which, if equal to the hidden set number, indicates that the instance is satisfied. This is because we can simply apply any algorithm for deciding if a polygon is a homestead polygon to this construction in the same way that Shermer takes the value $k$ as input for a maximum hidden set decision algorithm. The full details of the construction are provided by Shermer.

For a 3-SAT instance with m clauses and n variables, $k = 2mn + 8n + m + 1$. We show that there is a subgraph of every construction that has clique cover number equal to $K$. Shermer shows a set of convex pieces which cover the construction of size $k$, so we just need to prove that no cover with less pieces can exist using the subgraph argument. First we show the subgraph in Figure \ref{fig:fig21}, with dotted edges signifying that the edge connects to a point in a different component. We use the same labels as in Shermer's proof.

\begin{figure}[ht]
    \centering
\includegraphics[width = 300px]{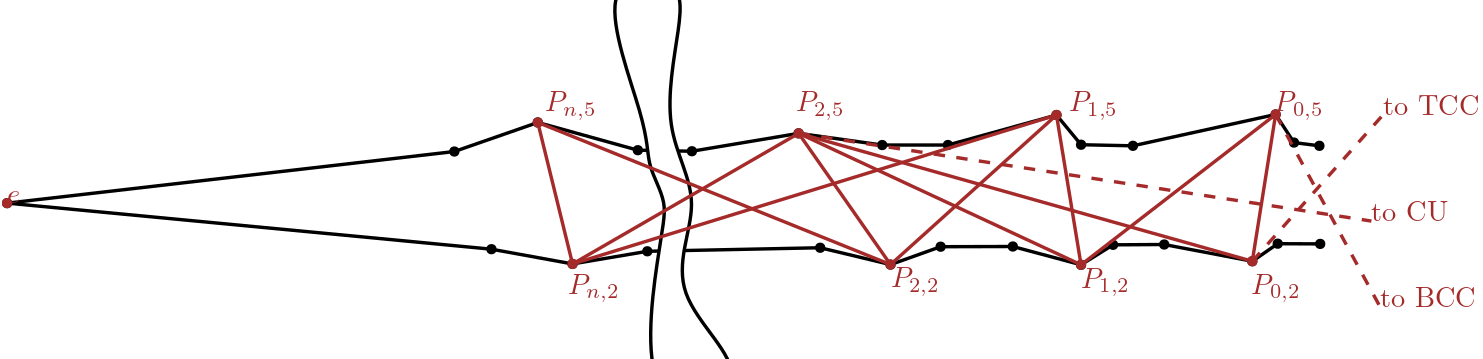}
\includegraphics[width = 100px]{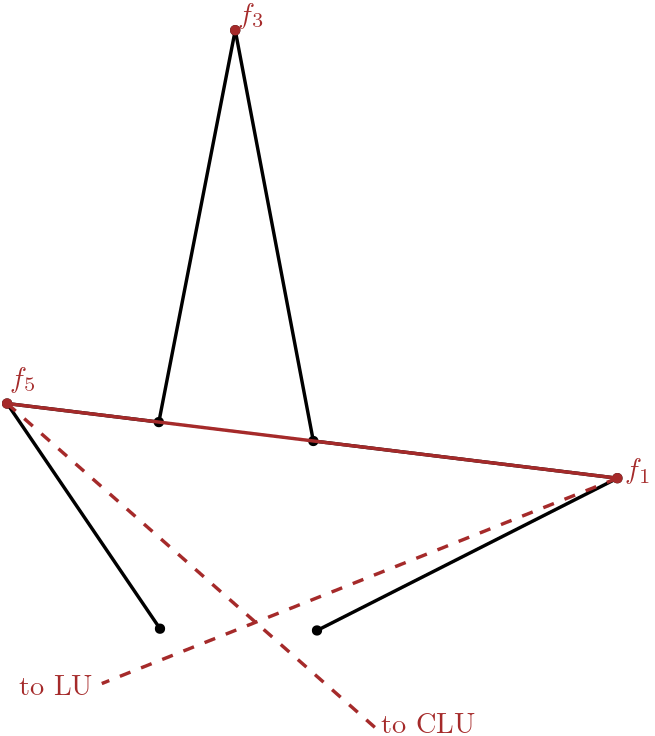}
\includegraphics[height = 100px]{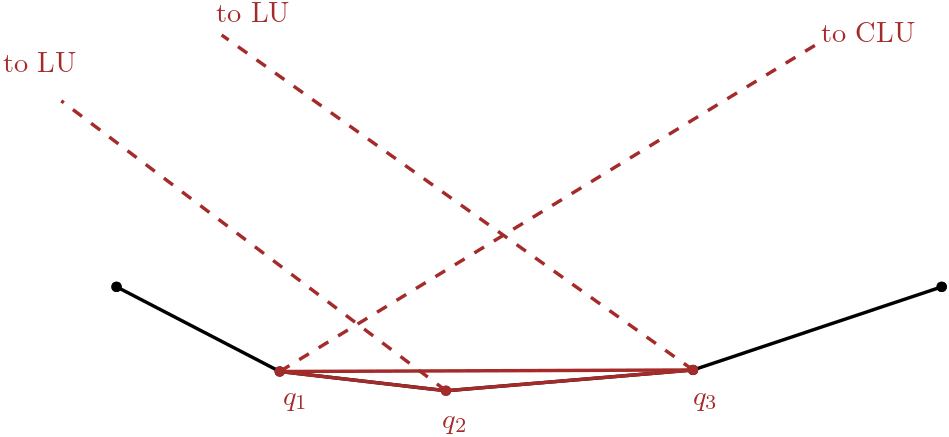}
\includegraphics[height = 100px]{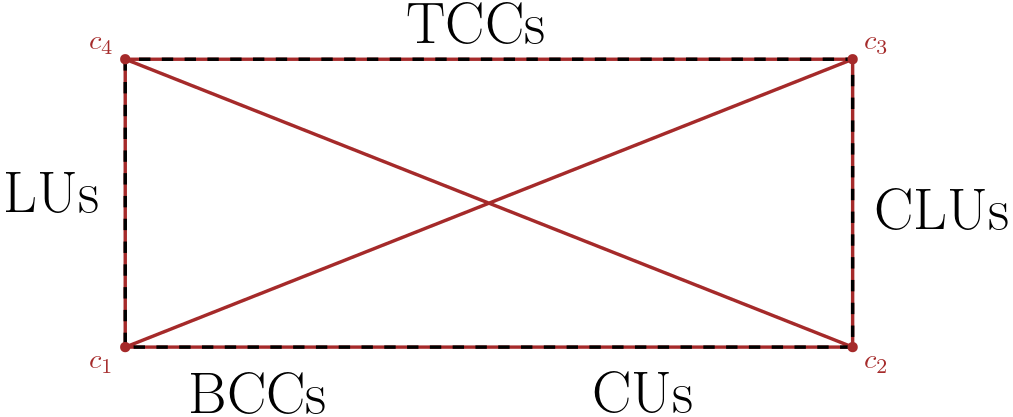}
    \caption{The subgraphs in each of the shermer construction components. Top is a literal unit, left is a consistency checker, right is a clause unit and bottom is the overall structure, indicating we are including the corners.}
    \label{fig:fig21}
\end{figure}

Just as in the previous examples, we start by finding a large clique in the complement graph since the chromatic number of the complement must be at least the size of that clique. A clique in the complement graph is an independent set in the graph, so we will select a set of points identified in Figure \ref{fig:fig20} such that no two share an edge.

For each literal unit, we take all the $e$ points and the $P_{i,2}$ points for $0 \leq i \leq n$, amounting to $2mn+4n$ points. For each consistency checker, we take all the $f$ for an additional $2n$ points. For each clause checker, we take all the $q_0$ points for an additional $m$ points. And from the overall box structure, we take $c_1$. The total size of this clique is $2mn + 6n + m + 1$, meaning we will have to show that an additional $2n$ colors are needed to color the complement graph.

For a start, we can label $c_2,c_3,c_4$ the same color as $c_1$ without any issue. For each $P_{i,5}$ in each literal unit, we can label $P_{i,5}$ with the same color. For each clause unit, $q_2,q_3$, can be labeled with the same color as the $q_1$. This leaves only the $f_1,f_5$ from each literal unit uncolored. The $f_1$ can be colored the same as the $P_{0,2}$ (or $P_{0,5}$ if its a bottom clause unit), but if we colored $P_{0,5}$ ($P_{0,2}$ for bottom) the same as $P_{0,2}$ ($P_{0,5}$ for bottom), then we can't. The same applies for $f_5$, but with the complement literal units (CLUs). However, $f_1$ and $f_5$ can be colored with the same color. Therefore we need only an additional $2n$ colors to color $f_1$ and $f_5$. Therefore, the chromatic number of the complement of the subgraph is $2mn + 6n + m + 1 = k$

Since the chromatic number of the complement of an induced subgraph of the PVG of the construction is equal to $k$, this means that the convex cover of the construction must be at least $k$. Since Shermer shows such a cover, this implies that the convex cover number of the construction is exactly $k$.

Since the convex cover number of the Shermer construction $I'$ for any 3-SAT instance $I$ is equal to $k$, and the hidden set number of $I'$ is strictly less than $k$ if and only if $I$ is unsatisfiable, 3-SAT reduces to deciding if a simple polygon is a homestead polygon or not. Thus, since 3-SAT is NP-hard, deciding if a simple polygon is a homestead polygon or not must also be NP-hard.

\end{proof}

\begin{corollary}
Deciding if a polygon with or without holes is a homestead polygon or not is NP-hard.
\end{corollary}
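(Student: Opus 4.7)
The plan is to observe that this corollary is an immediate consequence of the theorem by a trivial containment argument. The class of polygons with or without holes strictly contains the class of simple polygons (simple polygons are precisely those with zero holes), so any algorithm that decides the homestead property on the general class must in particular decide it on simple polygons. I would frame the proof as a reduction from the simple-polygon homestead decision problem to the general-polygon homestead decision problem, where the reduction is the identity map.

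To make this formal, first I would verify that the definitions of \textit{homestead} (meaning $cc(P) = hs(P)$) and the underlying quantities $cc(P), hs(P)$ are well-defined for polygons with holes: the hidden set number, defined as the independence number of $PVG(P)$, is well-defined for any polygon, and the convex cover number is defined as the minimum cardinality of a set of convex pieces whose union is $P$, which is also well-defined regardless of holes. The inequality $hs(P) \leq cc(P)$ is already noted in the preliminaries to hold in both regimes.

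Then I would note that Shermer's construction $I'$ produced in the proof of the theorem is always a simple polygon, and hence is simultaneously an instance of the general homestead decision problem. Since $I'$ is a homestead if and only if the 3-SAT instance $I$ is satisfiable, 3-SAT reduces in polynomial time to deciding whether a polygon (with or without holes) is a homestead. The main (and only) subtlety is making sure to cite the graph-theoretic caveat mentioned in the preliminaries: for polygons with holes, cliques in $PVG(P)$ need not correspond to convex pieces; but because the reduction outputs only simple polygons, this subtlety never comes into play, and the equivalence between the convex cover value $k$ and satisfiability of $I$ is preserved. Hence the general problem is NP-hard.
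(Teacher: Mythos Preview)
Your proposal is correct and follows essentially the same approach as the paper: the paper's own proof simply observes that since simple polygons are a special case of polygons with or without holes, NP-hardness for the former immediately implies NP-hardness for the latter. Your write-up is more detailed (explicitly framing it as an identity reduction and noting the caveat about cliques in $PVG(P)$ for polygons with holes), but the underlying argument is identical.
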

\begin{proof}
Since deciding if a simple polygon is a homestead polygon is NP-hard, this naturally implies that the more general case of a polygon with or without holes is also NP-hard.
\end{proof}

In addition to these hardness proofs, we present several simple polygons for which $hs(P) < cc(P)$. The first polygon for which it was proven that the hidden set number and convex cover number do not coincide was given in Browne and Chiu \cite{browne2022collapsing}. The polygons we present each have different properties, some of which will be exploited later for approximation algorithms. However, exact algorithms which find a convex cover and hidden set of the same size cannot exist for these subclasses since they are not universally homesteads, meaning that alternative methods would have to be used for both problems than those of the type presented here.

\begin{theorem}
There exists an orthogonal polygon which is not a homestead polygon
\end{theorem}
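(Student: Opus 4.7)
The plan is to exhibit one explicit orthogonal polygon $P$ and directly verify that $hs(P) < cc(P)$. A natural starting point is the non-homestead polygon of Browne and Chiu \cite{browne2022collapsing}: I would orthogonalize it by replacing each non-axis-aligned boundary edge with a fine axis-aligned staircase while keeping all horizontal and vertical edges intact. The step size is chosen small relative to every other feature size in the polygon so that the essential visibility structure of the original construction is inherited by the orthogonal version.

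I would then carry out three steps in order. First, a visibility-preservation argument: with sufficiently small step size, any sight line present in the original polygon either survives in the staircase version or is replaced by a parallel sight line through the same pair of regions, so the original maximum hidden set lifts to $P$ without change in cardinality. Second, a transfer of the convex cover lower bound: every convex piece in the orthogonal polygon sits inside a suitably shrunken copy of the original polygon, so a lower bound on $cc$ for the Browne--Chiu construction implies the same lower bound for $P$. Third, I would directly exhibit matching witnesses: a hidden set of size $k{-}1$ and a convex cover of size $k$ in $P$, which together yield $hs(P) < cc(P)$.

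The main obstacle is the visibility step, since orthogonalizing a slanted edge can both introduce accidental sight lines through the notches of the new staircase and destroy sight lines that previously grazed the original edge. I would handle this by choosing every non-axis-aligned edge in the base construction to be nearly horizontal or nearly vertical, so the staircase's reflex vertices collectively block all critical grazing sight lines, and by forcing the step size below the width of every narrow corridor so that no convex piece can ``jump'' past a staircase corner and merge two pieces that were separate in the original. If this adaptation proves too delicate, a backup plan is a ground-up orthogonal construction in the spirit of an asymmetric pinwheel with a small internal notch at its hub, designed so that the notch forces one extra convex piece beyond the number of mutually hidden points at the arm tips; the same three-step verification (upper bound on $hs$, lower bound on $cc$, matching examples) would then apply to this polygon directly.
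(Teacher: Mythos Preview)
Your staircase-orthogonalization plan has a gap more serious than the obstacle you already flag. Replacing a slanted edge by a fine staircase introduces on the order of $k$ new reflex vertices, and these reflex corners carve small notches into the interior; depending on the local geometry, points placed in distinct notches can be mutually invisible, so $hs(P)$ is in general \emph{not} equal to $hs$ of the Browne--Chiu polygon---it can grow with the number of steps. More fundamentally, your first step (``the original maximum hidden set lifts to $P$ without change in cardinality'') is aimed in the wrong direction: even if it succeeds it yields only the lower bound $hs(P)\ge hs(\text{original})$, whereas to conclude $hs(P)<cc(P)$ you need an \emph{upper} bound on $hs(P)$ that rules out larger hidden sets exploiting the new staircase structure. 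Nothing in the plan supplies that upper bound. Both $hs$ and $cc$ typically increase together under staircase refinement, and the plan gives no control over their difference. The backup pinwheel construction is not specified concretely enough to evaluate.

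For contrast, the paper avoids any perturbation or limiting argument: it writes down one explicit orthogonal (indeed $x$-monotone) polygon $M$, proves $hs(M)\le 3$ by a strong-visibility case analysis (for each candidate region $R$, the complement of the strong-visibility region of $R$ admits a cover by two convex pieces, so any hidden set containing a point of $R$ has size at most $3$), and proves $cc(M)\ge 4$ by exhibiting a seven-point induced subgraph of $PVG(M)$ whose complement has chromatic number~$4$.
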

\begin{proof}

\begin{figure}[ht]
    \centering
\includegraphics[width = 150px]{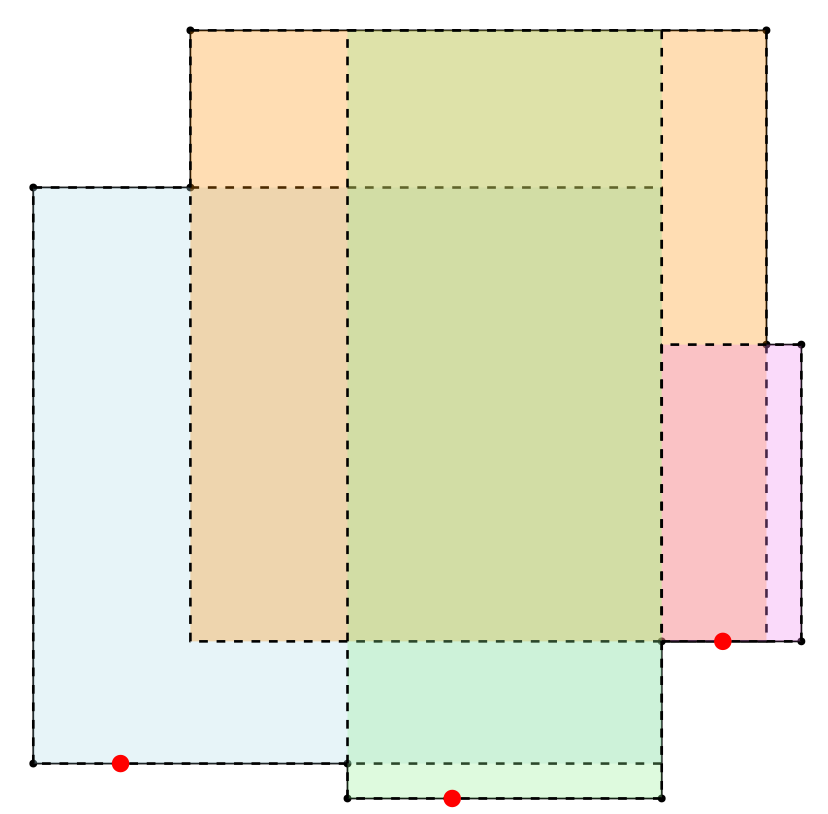}
    \caption{Orthogonal and x-monotone polygon, $M$ which is not a homestead polygon.}
    \label{fig:fig18}
\end{figure}

Observe the example $M$ in Figure \ref{fig:fig18}. This polygon is clearly orthogonal. 
First we will show that $M$ does not admit a hidden set larger than size 3 and then show that the minimum convex cover must be greater than or equal to $4$ by using a theorem from graph theory. 

We start by finding a convex cover of size 4. We can remove out the points that are in the intersection of 2 or more convex pieces since their use can only admit a hidden set of size 3 or smaller. This is because if among the two remaining convex pieces, there can only be an additional two points that can be hidden in them, by definition of convexity. See Figure \ref{fig:fig181}, the regions outlined in purple are the remaining points that have not been eliminated.

\begin{figure}[ht]
    \centering
\includegraphics[width = 200px]{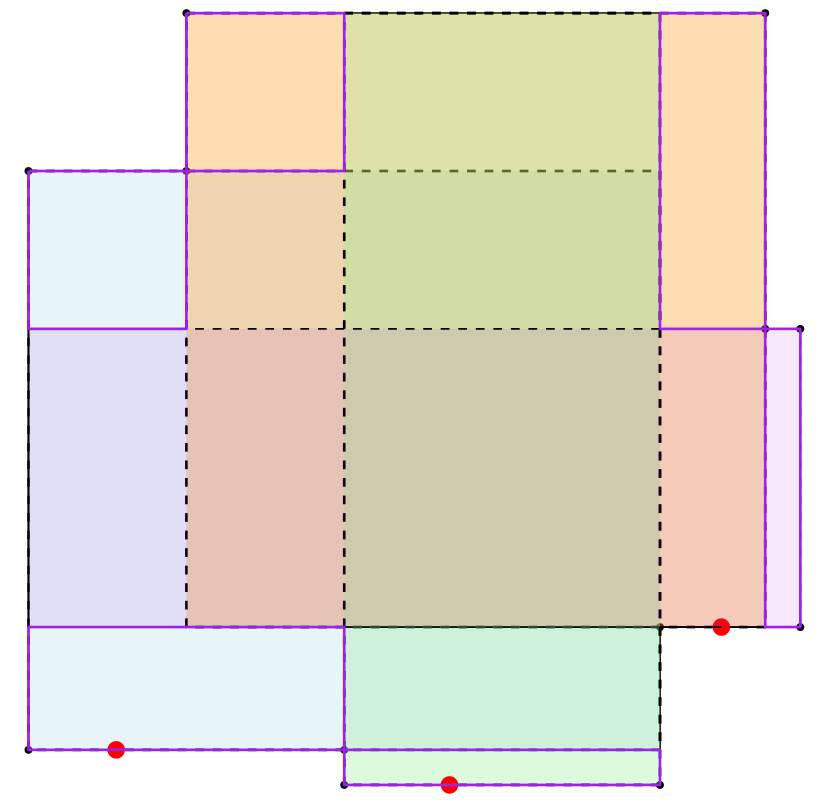}
    \caption{Showing the remaining regions of $M$ that haven't been ruled out by a starting convex cover.}
    \label{fig:fig181}
\end{figure}

Now we show that for each purple region $R_i$, after taking out the strong visibility region of $R_i$, the rest of $M$ can be covered with just 2 convex pieces. By definition of strong visibility, all the points in $R_i$ see all the points in the strong visibility region. Therefore, using any point in $R_i$ disqualifies usage of the strong visibility region, hence covering the rest with 2 convex pieces means at most 2 additional hidden points can be used.

For some of the purple regions, we split $R_i$ into 2 regions and analyze them separately because the strong visibility region of the whole of $R_i$ is just a convex piece and thus we needed 3 to complete that cover, but when the region is split both pieces can have the remaining portion covered with just 2. This full analysis is shown in Figure \ref{fig:fig182}, with the specific region being analyzed in each iteration shaded in purple.

\begin{figure}[ht]
    \centering
\includegraphics[width = 300px]{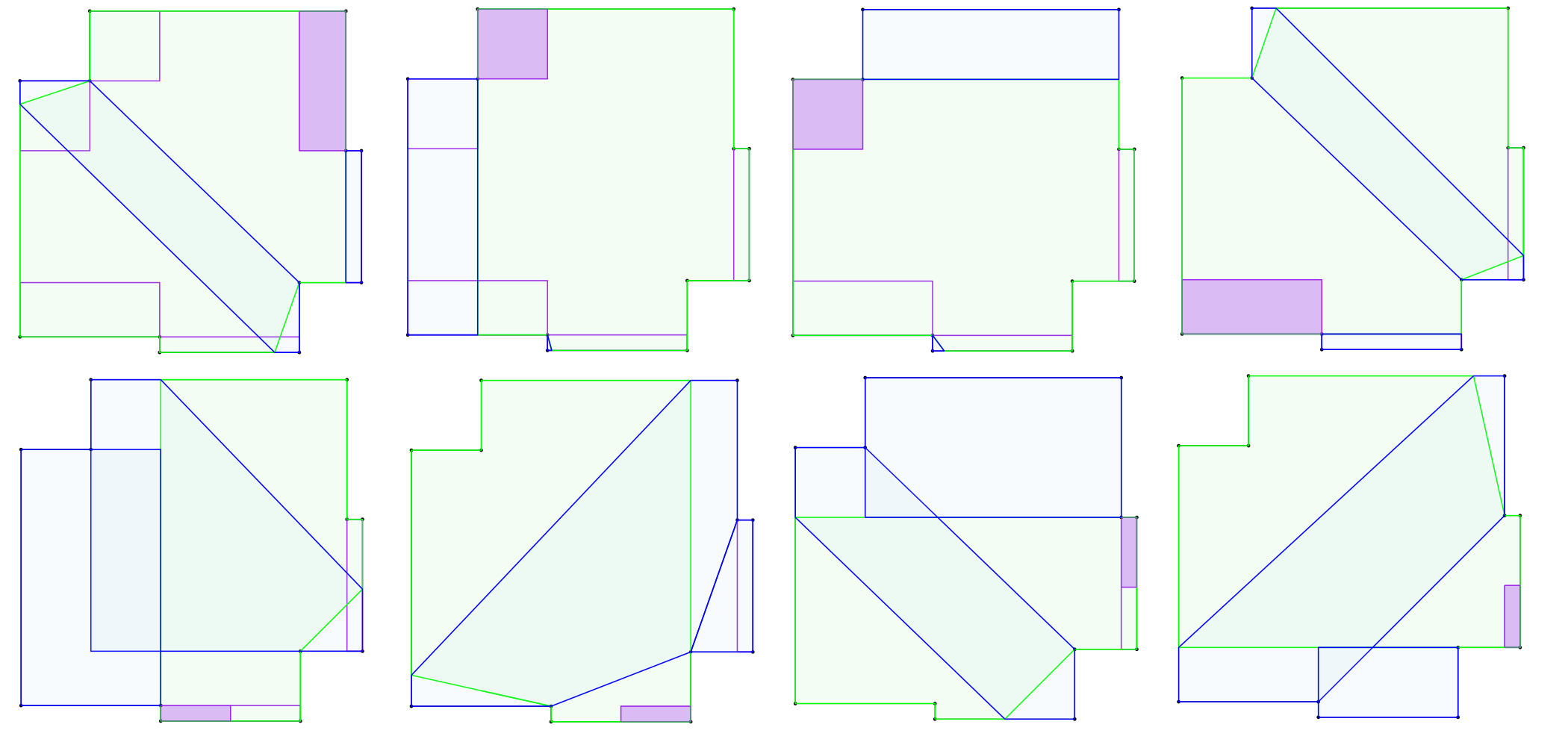}
    \caption{Showing the size 2 convex covers of the remaining region after taking out the strong visibility regions of the purple regions.}
    \label{fig:fig182}
\end{figure}

Since, as is shown in Figure \ref{fig:fig182}, including the strongly visible region of any purple region allows the rest of $M$ to be covered with just 2 added convex pieces, we know that $M$ admits hidden sets of size at most 3. 

For convex cover, we find an indueced subgraph of the PVG of $M$ with clique cover number 4. From Gella and Artes \cite{Gella2014CliqueCO}, we know that for graph $G$ and induced subgraph $H$, the following inequality holds:
\begin{equation}
    cc(G) \geq cc(H)
\end{equation}
Therefore, since convex cover is simply the clique cover of the point visibility graph, finding that clique cover number of the point visbility graph in $M$ restricted to a small set of points is 4 will suffice to show that $cc(M) \geq 4$.

We show this through finding the chromatic number of an isomorphism of the complement graph in Figure \ref{fig:fig183}. Start by observing that nodes 1,2,3 form a clique in the complement, thus must all have different colors, say red, blue and green respectively. Node 6 is adjacent to both red (1) and blue (2), it must be green. Node 5 is adjacent to green (6) and red (1) and thus must be blue. Node 7 is adjacent to green (3) and blue (5) so it must be red. Since 4 is adjacent to green (3), red (7), and blue (5), it must be a fourth color. Therefore the chromatic number of $G'$ must be 4. This in turn proves that $cc(G) = 4$ and in turn $cc(P) \geq 4$. Therefore $M$, an orthogonal polygon, is not a homestead.

\begin{figure}[ht]
    \centering
    \includegraphics[width = 100px]{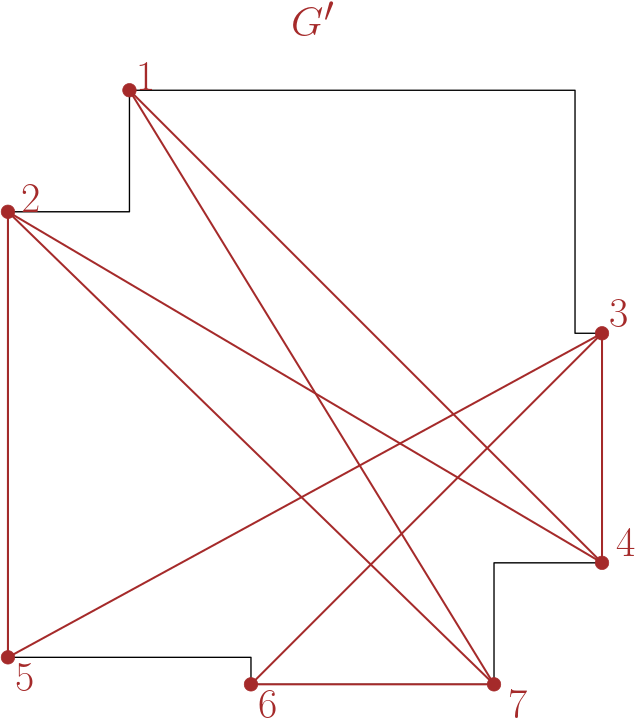}
\includegraphics[width = 100px]{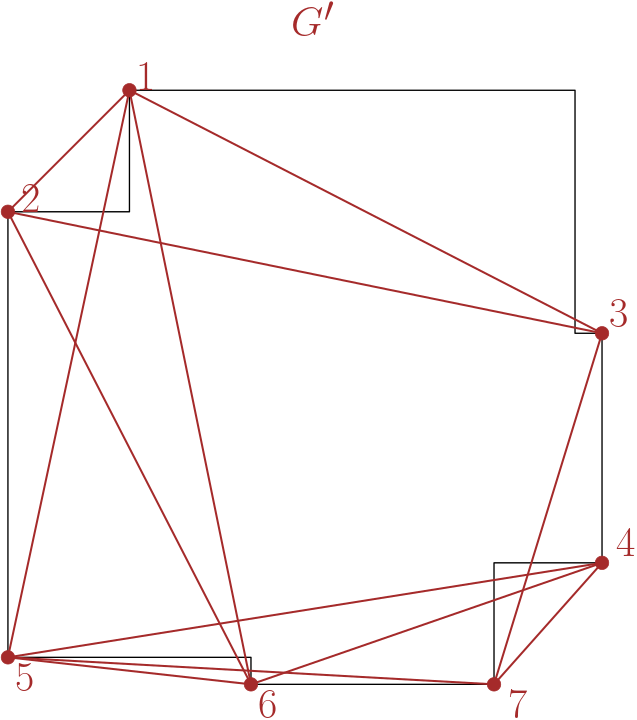}
\includegraphics[width = 100px]{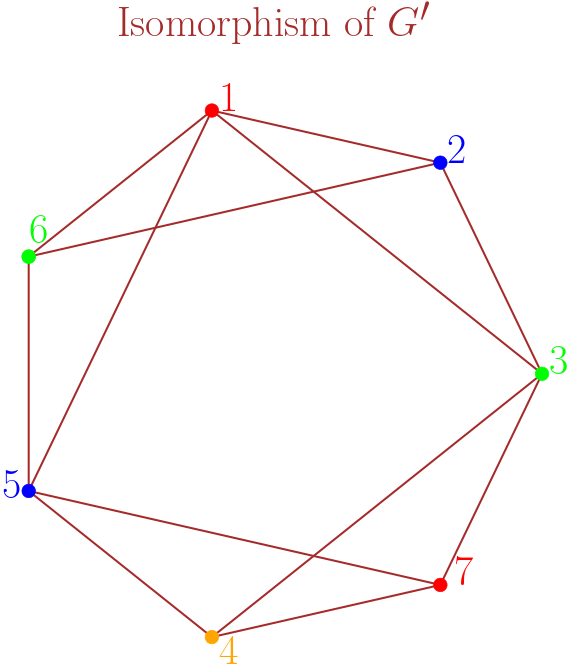}
    \caption{Showing an induced subgraph of $PVG(M)$ with $cc(G) = 4$ from chromatic number of compliment.}
    \label{fig:fig183}
\end{figure}

\end{proof}

\begin{corollary}
There exists an x-monotone polygon which is not a homestead polygon.
\end{corollary}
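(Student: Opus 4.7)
The plan is to observe that the orthogonal polygon $M$ already exhibited in the proof of the preceding theorem is itself x-monotone, so no fresh construction is required --- only a verification. Recall that a polygon is x-monotone if and only if every vertical line meets the polygon in a single (possibly empty) segment, or equivalently if $\partial M$ decomposes into an upper chain and a lower chain, each of whose vertices has weakly monotone $x$-coordinates.

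First I would verify, by inspection of Figure \ref{fig:fig18}, that $M$ satisfies this condition: traversing $\partial M$ from its leftmost vertex to its rightmost vertex along the top and then back along the bottom yields two chains monotone in $x$. Because $M$ is orthogonal, the check reduces to confirming that no horizontal "back-tracking" edge appears in either chain; this is immediate from the figure (and is already noted in its caption).

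Having established x-monotonicity, the corollary follows at once: the preceding theorem gave $hs(M) \leq 3 < 4 \leq cc(M)$, so $M$ is an x-monotone polygon that is not a homestead. The only real obstacle is making sure that the depiction of $M$ in the figure is genuinely drawn in an x-monotone fashion; should any ambiguity remain, a horizontal stretch of each "shelf" of $M$ makes monotonicity manifest without altering the combinatorial visibility structure, so the hidden-set bound via strong visibility regions and the clique-cover bound via the induced subgraph of $\mathrm{PVG}(M)$ both carry over verbatim.
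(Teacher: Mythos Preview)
Your proposal is correct and follows exactly the same approach as the paper: it simply observes that the polygon $M$ from the preceding theorem is already x-monotone (as the caption of Figure~\ref{fig:fig18} even announces), so the corollary is inherited directly from that theorem. The paper's proof is a one-liner to this effect; your additional remarks on verifying monotonicity and the contingency of horizontally stretching $M$ are harmless elaborations but are not needed.
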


\begin{proof}
The polygon $M$ is also clearly x-monotone in addition to being orthogonal. Therefore, this corollary is true from the previous proof that $M$ is not a homestead.
\end{proof}

\subsubsection{General position}

Another subclass of polygon that we had conjectured was comprised only of homestead polygons was polygons whose vertices were in general position. A set of vertices is said to be in general position if no 3 vertices are collinear and no 4 vertices are said to be cocircular.

\begin{figure}[ht]
    \centering
\includegraphics[width = 200px]{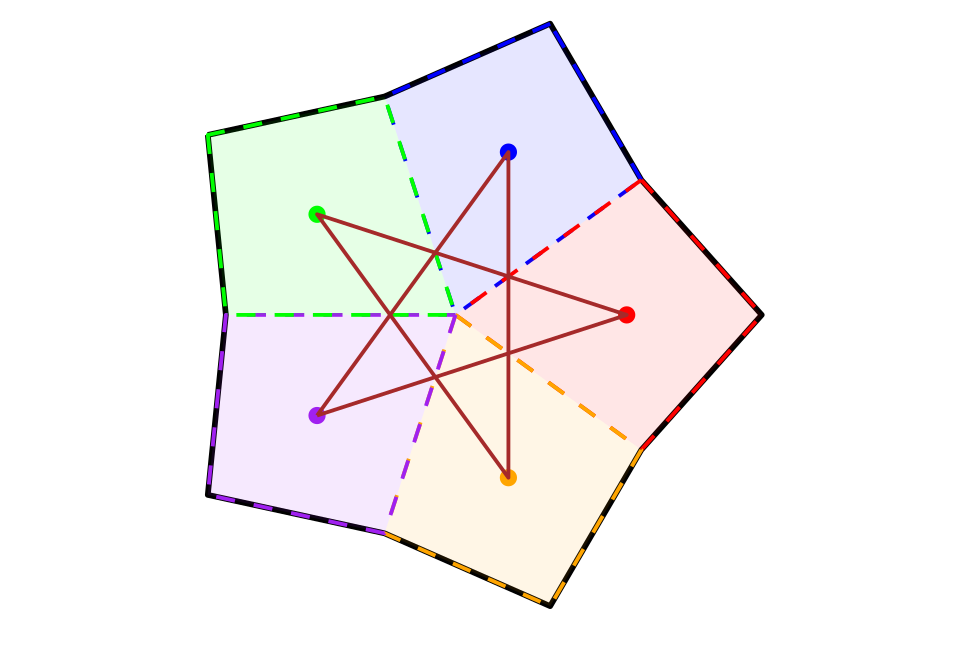}
    \caption{A polygon $W$ which has no three collinear vertices and is not a homestead.}
    \label{fig:fig19}
\end{figure}

\begin{theorem}
There exists a polygon which is not a homestead polygon which has vertices in general position
\end{theorem}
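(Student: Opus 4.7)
The plan is to mirror the structure of the previous theorem (that $M$ is not a homestead), applied now to the polygon $W$ of Figure \ref{fig:fig19}: exhibit a value $k$ with $hs(W)\le k$ and $cc(W)\ge k+1$, while additionally verifying that the vertices of $W$ satisfy the general-position hypothesis, i.e.\ no three are collinear and no four are cocircular.

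First I would fix a concrete coordinatization of $W$ and verify general position directly by checking the finitely many triples and quadruples of vertices. A useful framing is to start from a symmetric star (whose visibility structure is easy to read off) and then argue that a sufficiently small generic perturbation of its vertices preserves the point-visibility graph while destroying every accidental collinearity and cocircularity; since $hs$ and $cc$ are determined by the PVG and the combinatorial placement of pieces, both invariants are unchanged under such a perturbation.

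For the hidden set upper bound, I would produce an explicit convex cover of $W$ and then reuse the elimination argument from the orthogonal case: points lying in the intersection of two or more covering pieces cannot help achieve a hidden set of size $k+1$, so attention narrows to a small number of pocket regions $R_i$. For each $R_i$, placing a hidden point inside forces the strong visibility region of $R_i$ to be avoided, after which the remainder of $W$ can be covered with so few additional convex pieces that the total hidden set cannot exceed $k$. Where a single pocket resists this bound, splitting it into sub-pockets and analyzing each separately, as done for $M$, should close the gap.

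For the lower bound on $cc(W)$, I would select a small witness set of points in $W$, extract the induced subgraph $H$ of $PVG(W)$, and show via a forced-coloring chain (as with nodes $1,2,3,\ldots,7$ in the orthogonal proof) that the chromatic number of $\overline{H}$ is at least $k+1$; the inequality $cc(G)\ge cc(H)$ then yields $cc(W)\ge k+1$. The main obstacle is precisely the general-position constraint: the clean forced-coloring argument for $M$ exploited edges aligned along axis-parallel lines so that several pairs of points cleanly see or fail to see each other, and after a generic perturbation those ``tight'' visibilities must be re-examined. The witness points therefore need to be chosen in the interiors of pockets and spikes of $W$ (not at perturbed vertices), so that each critical visibility or non-visibility relation is robust and survives the perturbation, giving the desired strict inequality $hs(W)<cc(W)$.
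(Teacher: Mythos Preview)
Your outline is a valid strategy and would succeed, but it follows a considerably heavier route than the paper's own argument. The paper works with a five-pointed star $W$ having fivefold rotational symmetry, and that symmetry does almost all the work: for the convex-cover lower bound the witness set is simply the five convex ``tip'' vertices, whose induced visibility graph is the cycle $C_5$, a graph with clique cover number $3$ by a one-line observation (no triangle, five vertices). There is no need for a forced-coloring chain of the kind you propose. For the hidden-set upper bound the paper analyzes a single one of the five congruent pocket regions, notes that it strongly sees its two neighbors, and that the two remaining pockets are mutually visible and hence coverable by a single convex piece; rotational symmetry then finishes the argument that $hs(W)\le 2$. Finally, rather than a generic perturbation followed by a robustness check, the paper simply passes from $W$ to a specific $W'$ in general position and observes that the same $C_5$ adjacency pattern among the tips, and the same strong-visibility relations among the pocket regions, persist.

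Two points where your plan is more cautious than necessary: the relevant non-visibilities between star tips are blocked by reflex dents and are not ``tight'' in any sense, so there is no need to retreat to interior witness points after perturbation; and you never need to name a starting convex cover or run the intersection-elimination machinery from the orthogonal case, since the symmetry-based count already pins $hs(W)\le 2$ directly. Your approach buys generality (it would apply to polygons lacking symmetry), while the paper's buys brevity and concrete constants $hs\le 2<3\le cc$.
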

\begin{proof}

First we show a polygon $W$ (in Figure \ref{fig:fig19} which is a polygon where no 3 of the vertices are collinear, but does have cocircular points, and is not a homestead polygon. The polygon exhibits rotational symmetry which allows us to simplify our analysis. First we show that the hidden set number of $W$ is 2, then that its convex cover number is 3 and then give a modification of $W'$ that does not change any of the properties which make $W$ not a homestead.

For hidden set, we analyze the green region of $W$ as, due to rotational symmetry, it is equivalent to analyzing all the other regions. Each of the colored regions have nodes in their centers. These nodes are connected if their regions are strongly visible to each other (their strong visibility regions contain all of the other). Any point in the green region sees all the points in the red and orange regions, meaning we need to cover the blue and purple regions with just one convex polygon. Since blue and purple have an edge, this implies that they see each other, so simply taking their convex hull will produce the convex piece. Because only 1 additional convex piece is needed after using the strong visibility region of the green region, and the green region is, by rotational symmetry, equivalent to all the other regions which total all of $W$, $W$ admits hidden sets of size at most 2.

For convex cover, we take the convex vertices as our subgraph ("points of the star"). It is clear that the graph induced by these vertices is the same as that of the regions from the hidden set discussion, ie the convex vertex in the green region sees the vertices in the red and orange regions but not the blue and purple. This graph is $C_5$, a well known graph which has clique cover number 3.

\begin{figure}[ht]
    \centering
\includegraphics[width = 150px]{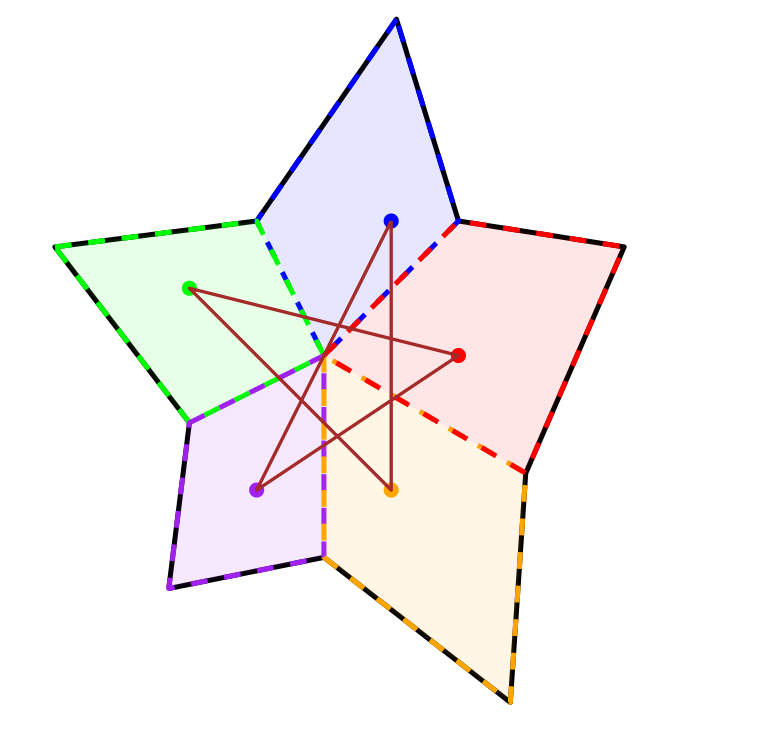}
    \caption{A polygon $W'$ whose vertices are in general position and which is not a homestead polygon.}
    \label{fig:fig20}
\end{figure}

Observing $W'$ (see Figure \ref{fig:fig20}), it is clear that the colored regions are related just as those of $W$. The convex vertices also are related in the same way as those of $W$. Therefore, $W'$, a polygon which has vertices in general position, must have hidden set number of at most 2 and convex convex number of at least 3 and must not be a homestead polygon. 

\end{proof}

\begin{corollary}
There exists a star-shaped polygon which is not a homestead polygon.
\end{corollary}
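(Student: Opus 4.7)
The plan is to reuse the polygon $W$ (or its perturbation $W'$) from the previous theorem and argue that it is star-shaped, so that the same hidden-set/convex-cover gap witnesses the corollary. Recall that a simple polygon is \emph{star-shaped} if its kernel, defined as the intersection of all half-planes bounded by the lines through its edges and containing the polygon locally, is non-empty; equivalently, there exists a point $k\in P$ that sees every point of $P$.

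First I would identify a candidate kernel point. The construction of $W$ in Figure~\ref{fig:fig19} is described as exhibiting fivefold rotational symmetry with five colored ``arms'' meeting at the center; the natural candidate is the center of rotation $k$. I would then verify that $k$ lies in the kernel by checking that the five reflex notches between consecutive arms are all oriented so that their bounding edges, when extended, do not separate $k$ from any point of $P$. Because of the rotational symmetry, it suffices to verify this for a single arm and its adjacent notch: the segment from $k$ to any point in that arm stays inside $P$, and this property then propagates to all five arms by symmetry. Thus $k$ sees all of $W$, so $W$ is star-shaped.

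Next I would transfer the argument to the general-position variant $W'$. The perturbation that produces $W'$ from $W$ only slightly wobbles the vertices to destroy collinearities and cocircularities while preserving the combinatorial structure of the visibility relations among the colored regions (as used in the previous proof). Since the kernel is an open condition on the vertex coordinates, a sufficiently small perturbation leaves the kernel non-empty, so $W'$ remains star-shaped. Therefore $W'$ is simultaneously star-shaped, in general position, and satisfies $hs(W')\le 2 < 3 \le cc(W')$ by the previous theorem.

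The main obstacle I anticipate is rigorously certifying that the center $k$ lies in the kernel without an explicit coordinate description of $W$ and $W'$; the argument rests on the symmetry claim and the ``arm plus notch'' local check, which one should confirm from the figure. Once this visibility certificate is in place, the corollary follows immediately: $W$ (or $W'$) is a star-shaped polygon that is not a homestead.
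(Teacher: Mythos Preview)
Your proposal is correct and follows essentially the same approach as the paper: reuse the polygon $W$ from the preceding theorem and observe that its center lies in the kernel, so $W$ is star-shaped and already witnesses the gap. The paper's justification is even shorter---it simply notes that the center lies in all of the colored regions and hence sees all of $W$---so your half-plane/symmetry check and the detour through $W'$ are more than what is needed, but not wrong.
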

\begin{proof}
A star shaped polygon is a polygon for which there is some point or set of points that see the entirety of the polygon. The center point of $W$ is in all of the colored regions, therefore must see the entirety of $W$. Therefore, since $W$ is not a homestead polygon, we know that it is not the case that the entire class of star-shaped polygons are homestead polygons.
\end{proof}

\section{Exact and Approximation Algorithms}

\subsection{Polygons with Holes}

For polygons with holes, there is no known approximation algorithm for hidden set which performs better than simply placing a single point inside the polygon, which gives a $O(1/n)$ approximation. However, there does exist a $O(\log n)$-approximation which runs in $O(n^{29}\log n)$ time from Eidenbenz and Widmayer \cite{EidenbenzConvexCover}. Their algorithm starts by showing that a factor of at most 3 is lost by restricting the polygons to unions of cells in a polynomial sized arrangement. Then, they provide a set of maximal convex pieces from which the optimal convex cover restricted to that arrangement must use. From there, the set of maximal convex pieces can be treated as sets and the cells of the arrangement as elements for a set cover. This permits the use of the greedy set cover algorithm on this set of maximal convex pieces, yielding a $O(\log n)$ approximation for the restricted problem, which in turn is only a constant factor away from the actual optimal. In order to make sure this runs in polynomial time, a dynamic program is used to find the greedy choices of convex pieces, since the set of maximal convex pieces is superpolynomial.

If the number of holes is $o(\log n)$, then a better approximation for minimum convex cover can be found using the constant factor approximation for the simple case given by Browne, Kasthurirangan, Mitchell, and Polishchuk \cite{Browne2023}. Simply decompose the polygon with holes into $h+1$ simple polygons by using horizontal chords from the holes to the boundary. This will decompose the polygon into at worst $h+1$ pieces, and further that any convex piece gets split into at most $h+1$ pieces by the splitting chords. Applying the algorithm for simple polygons to each of these pieces yields a $(6h+6)$-approximation. With slightly more care with respect to the chords, a similar approximation for maximum hidden set can be achieved. The main property that we need to maintain is that if two points of the same subpolygon do not see each other within the subpolygon, they don't see each other period. This can be achieved by splitting into exactly $h+1$ pieces using two horizontal chords for each hole: one extending leftward from the leftmost vertex of the hole, and one extending rightward from the rightmost vertex of the hole. Take the largest hidden set from the $h+1$ generated by running the $1/8$-approximation on each subpolygon, which yields a $\frac{1}{8h+8}$-approximation.

\subsection{Simple Polygons}\label{subsec:simple_poly}

Browne, Kasthurirangan, Mitchell, and Polishchuk \cite{Browne2023} gave an argument that by losing only a constant approximation factor (3 for convex cover and 1/4 for hidden set) we can reduce the problem in simple polygons to that of just considering the case of polygons that are weakly visible from a convex edge, which are defined in Section \ref{subsec:weaklyvis}. They achieve this using a window partition algorithm from Suri \cite{suriFirst}, which decomposes a simple polygon into regions by their link distance from a point. See Figure~\ref{staged} for an example of this decomposition When this point is a convex vertex of the polygon, these regions are all polygons weakly visible from a convex edge. 

\begin{figure}[ht]
\centering
\includegraphics[width=.80\columnwidth]{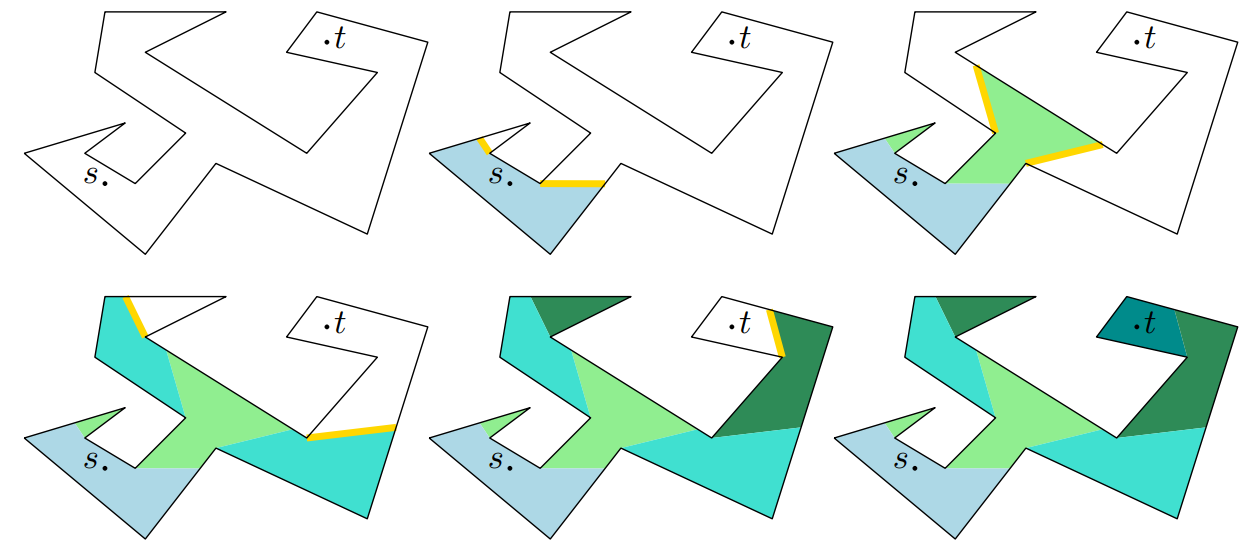}
\vspace{0.5cm}

\includegraphics[page=4,scale=.5]{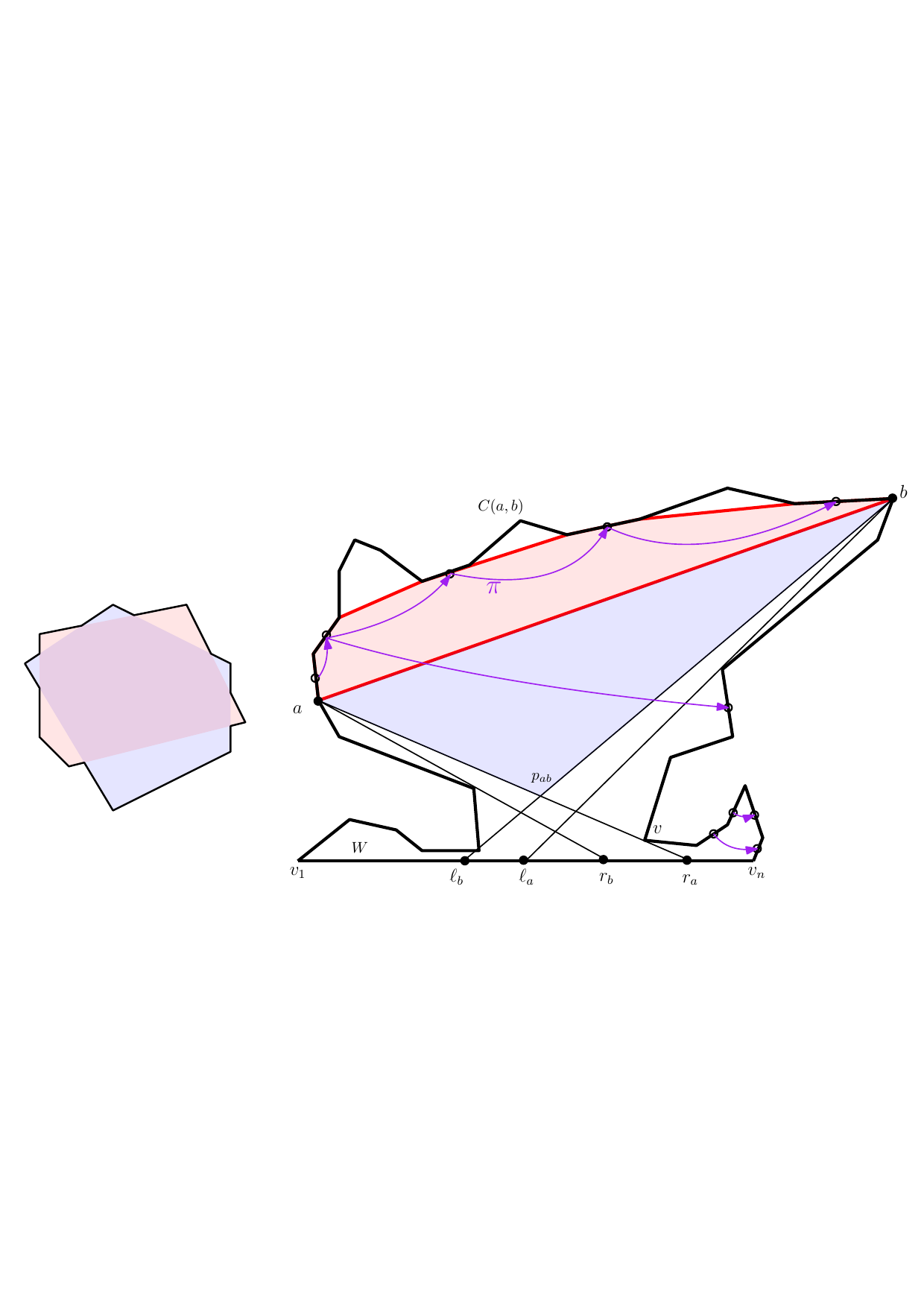}
\hfill
\caption{Top: Window partition of a simple polygon given in stages (Figure~3 from \cite{revisited}): The windows are yellow. Bottom: Right (red) and left (blue) windows of the weak visibility polygon (gray) of the yellow window. (Figure~6 from \cite{Browne2023})}
\label{staged}
\end{figure}

For convex cover, they argue that no convex body can intersect more than 3 different regions. This means that by restricting convex pieces to stay within one region, only an additional factor of 3 is lost. Using their approximation for convex cover in polygons weakly visible from a convex edge gives a 2-approximation for the convex cover of each region. Taking the union of these covers gives a cover of the whole polygon which is at worst 6 times as bad as the optimal solution.

For hidden set, they use an argument from Alegr{\'i}a, Bhattacharya and Ghosh \cite{Alegra2019A1A} that partitions the set of regions into 4 subsets. They define the subset membership by the link distance modulo 2 and the type of turn which was made from the last window to the window illuminating this region, i.e. a left turn or a right turn. We give their figure for distinguishing between left and right windows in Figure~\ref{staged}. These 4 subsets are each visually independent sets of regions, meaning that if you take all the regions within a subset, two points can only see each other if they are in the same region. Therefore, if we have hidden sets in each region, for each subset we can take the union of the hidden sets for its members since the regions are visually independent of each other. If we take the largest of these 4 hidden sets, then that would give a 1/4 approximation. However, their algorithm gives a 1/2 approximation for the case of polygons weakly visible from a convex edge, so these factors multiply to 1/8.

\subsection{Polygons Weakly Visible from a Convex Edge}\label{subsec:weaklyvis}

Browne, Kasthurirangan, Mitchell, and Polishchuk \cite{Browne2023} give a 2-approximation for minimum convex cover and a 1/2 approximation for maximum hidden set. They do so by finding a hidden set and a convex cover such that the convex cover is at most 2 times the size of the hidden set, which implies those factors due to the inequality from Shermer \cite{shermer_1989}, $hs(P) \leq cc(P)$. Their base algorithms for both problems run in $O(n^{2+o(1)})$ time, based on recent results on maximum flow \cite{ChenUpdated}. However, they remark that the runtime can be decreased to $O(n^2)$ for maximum hidden set, the details of which we present here in Section~\ref{sec:no_antichain}.

\subsubsection{Using Edges as a Poset}

The key insight of Browne, Kasthurirangan, Mitchell, and Polishchuk \cite{Browne2023} is that we can treat the edges of a polygon weakly visible from a convex edge as a partially ordered set (poset). This poset is determined by strong visibility, i.e. whether all of the points in an edge $e_i$ see all of the points in a edge $e_j$, and then if strong visibility exists the elements are comparable, with the comparison being the counterclockwise order of the edges. This can also be thought of as a DAG, where there is an arc pointed towards the later of the two edges. Importantly, this relation is transitive, so if for $i<j<k$, $e_i$ strongly sees $e_j$ and $e_j$ strongly sees $e_k$, then $e_i$ strongly sees $e_k$.

\begin{figure}[ht]
\centering
\includegraphics[page=1,width=0.5\columnwidth]{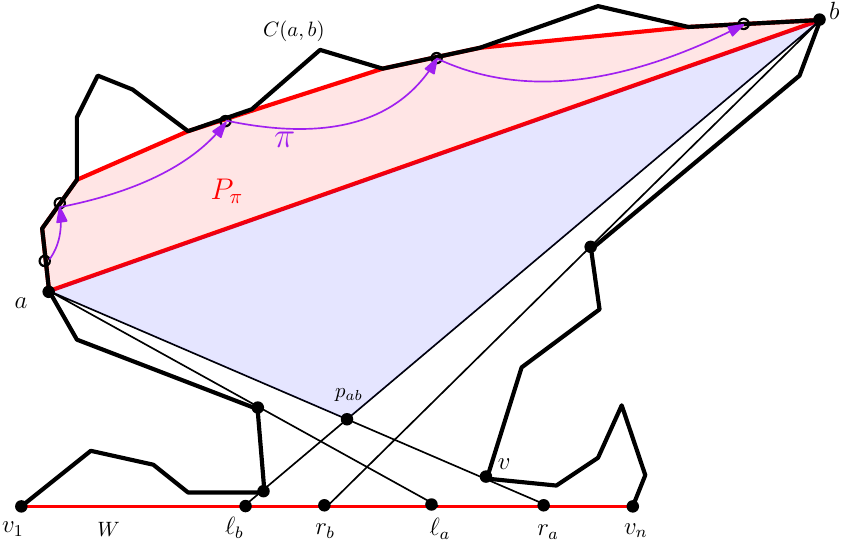}
\caption{Figure 1 from \cite{Browne2023}. Arcs indicating the partial ordering of selected edges.}\hfill
\label{fig:wvp}
\end{figure}

Using this poset, we can find a set of edges for which no pair strongly see each other, also called an \textit{antichain}. These edges may share partial visibility, so we need to pick special points along each edge so that they form a hidden set. Specifically, they choose points that are sufficiently close to one of the two endpoints of the edge based on a variation of the visbility decomposition of the polygon. From there, it suffices to simply choose the correct endpoint to be near for each edge, which is done using a recursive procedure on the edges. We give an alternate set of these candidates in Section~\ref{sec:no_antichain}.

For convex cover, they start by considering the minimum path cover of the poset. From this path cover, they construct a set of convex pieces which partially cover the polygon. From there, they augment the convex pieces to include some additional area. After this augmentation, they add an additional convex piece for every one of the original pieces, doubling the number of pieces. These additional convex pieces prove to be sufficient to complete the cover.

Due to Dilworth's \cite{dilworth1950decomposition} theorem, the maximum antichain and the minimum path cover of a poset are of the same size. This means there are at most twice as many convex pieces as hidden points, demonstrating their factors of 2 and 1/2 respectively. Due to recent advances in the maximum flow problem \cite{ChenUpdated}, finding either the maximum antichain or the minimum path cover both take $O(n^{2+o(1)})$ deterministically.

\subsubsection{A hidden set without an antichain}\label{sec:no_antichain}

We give the details of the improved algorithm for the maximum hidden set which runs in $O(n^2)$ mentioned in Browne, Kasthurirangan, Mitchell and Polishchuk \cite{Browne2023}.

\begin{lemma} \label{lem:superset_of_hidden}    
There exists a set $S\subset \partial P$ of $2n-2$ points such that for any antichain $I$ in $G$, there is a hidden subset of points $H\subset S$ with $|H| = |I|$. Such a set $S$ can be computed in polynomial time.
\end{lemma}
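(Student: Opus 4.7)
The plan is to construct $S$ by placing, on each of the $n-1$ non-base edges $e$ of $P$, exactly two candidate points $p_e^-$ and $p_e^+$, one inside a small sub-edge near each endpoint of $e$, where the lengths of the two sub-edges are determined by the same visibility decomposition that Browne, Kasthurirangan, Mitchell and Polishchuk use to place their hidden-set witnesses. Since $P$ is weakly visible from a convex edge and has $n$ edges in total, this produces $|S|=2(n-1)=2n-2$ points on $\partial P$. The decomposition has polynomial complexity, so $S$ is computable in polynomial time.

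Given an antichain $I=\{e_{i_1},\dots,e_{i_k}\}$ in the strong-visibility poset, the main step is to produce a size-$k$ hidden subset $H\subset S$ that picks exactly one of $\{p_e^-,p_e^+\}$ for each $e\in I$. I would do this by rerunning the endpoint-choice recursion of Browne et al.\ on $I$: it walks the antichain edges in counterclockwise order along $\partial P$ and, at each step, commits to the ``$-$'' or ``$+$'' side of the current edge based on the side chosen for the previous edge, using only the local visibility data between consecutive antichain edges. Because the sub-edges that contain $p_e^-$ and $p_e^+$ are precisely the regions where the recursion is free to place its witness, the output of the recursion can be read off as a selection from $S$, giving $H$ with $|H|=|I|$.

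The main obstacle is that the candidate points $p_e^\pm$ are fixed once and for all, independent of the antichain, whereas the original algorithm gets to place its witnesses adaptively after seeing $I$. I would handle this by showing that the ``safe pocket'' near each endpoint of $e$, meaning the sub-edge in which a witness is guaranteed to be hidden from any valid partner witness across any antichain containing $e$, depends only on the arrangement of strong-visibility windows incident to $e$ and not on $I$. Since each edge participates in only finitely many visibility events, one can shrink the two sub-edges around the endpoints of $e$ to lie inside the intersection of all these safe pockets simultaneously, and place $p_e^-$, $p_e^+$ there.

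Once the uniform placement is justified, the correctness of $H$ follows by induction on $k$ along the counterclockwise order of $I$: at each step the just-chosen point is incomparable under strong visibility to all previously chosen points (because those edges form an antichain), and partial visibility is defeated by the fact that $p_e^\pm$ sits inside the safe pocket of $e$. This gives a hidden $H\subset S$ with $|H|=|I|$, establishing the lemma.
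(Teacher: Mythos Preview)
Your high-level construction is essentially the same as the paper's: place two candidate points per non-base edge, one near each endpoint, with the ``how near'' determined by pairwise constraints over all other edges, and then for a given antichain select one candidate per edge. The paper makes this explicit by computing, for each ordered pair $e_i,e_j$ with $i<j$, a constraint on $t_i$ (the point near $v_{i+1}$) and on $s_j$ (the point near $v_j$), distinguishing whether the geodesic $SP(v_{i+1},v_j)$ is a single segment or has an interior vertex.

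Where your proposal diverges, and where it has a genuine gap, is the selection/induction step. You describe a linear walk through the antichain in boundary order, committing to $-$ or $+$ ``based on the side chosen for the previous edge, using only the local visibility data between consecutive antichain edges,'' and you argue hiddenness because the new point ``sits inside the safe pocket.'' But two non-consecutive antichain edges can still partially see each other, and the pocket on $e$ only guarantees that \emph{some} choice of side is hidden from a \emph{specific} choice of side on the other edge, not from whatever the greedy walk happened to pick several steps earlier. A purely local, sequential commitment does not control these long-range partial visibilities.

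The paper's induction is not linear but divide-and-conquer, and it carries an extra invariant: for every antichain there is a hidden selection satisfying (b) ``the lowest-index edge uses its $t$-point'' and another satisfying (c) ``the highest-index edge uses its $s$-point.'' The inductive step looks at the two extreme edges $e_{i_1},e_{i_{|I|}}$ of the antichain and splits on whether $SP(v_{i_1+1},v_{i_{|I|}})$ has an interior vertex $u$. If it does, $u$ separates the antichain into two sub-antichains that are mutually invisible; one applies the (b)-version of the hypothesis to the left part and the (c)-version to the right part and takes the union. If it does not, the edge-extension constraints from the construction force $t_{i_1}$ and $s_{i_{|I|}}$ to be hidden from each other, and one of them is also hidden from everything strictly between, so one peels off a single extreme edge and recurses. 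Your write-up is missing both the (b)/(c) strengthening of the induction hypothesis and the geodesic-based case split; without them the step ``partial visibility is defeated'' is not justified.
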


\begin{proof}
First we describe how to obtain the set $S$, and then prove its properties. 
For each edge $e_i$, $i=1,2,\ldots,n-1$, we find two points $s_i\in e_i$ (near $v_i$) and $t_i\in e_i$ (near $v_{i+1}$); these $2n-2$ points constitute the set~$S$.

We compute the shortest path trees, $SPT(v_i)$, for each vertex $v_i$; this takes $O(n)$ time per vertex~\cite{DBLP:journals/algorithmica/GuibasHLST87}, so overall time $O(n^2)$. This allows us to access the parent of $v_j$ with respect to $SPT(v_i)$ in $O(1)$. 

This in turn allows us to determine whether the $SP(v_i,v_j)$ contains no vertex (besides $v_i,v_j$), or $\geq 1$ vertices in $O(1)$ time. We can use this to calculate $s_i$ and $t_i$ ie. how close each must be to $v_i$ and $v_{i+1}$, respectively. Each edge pair $e_i, e_j$ defines 2 constraints, one on either $s_i$ or $t_i$ and one on either $s_j$ or $t_j$. We find these 2 constraints simultaneously. Assume without loss of generality, that $i < j$. This assumption allows us to know that we are constraining $t_i$ and $s_j$ and to consider $SP(v_{i+1},v_j)$. There are 2 cases depending on the complexity of $SP(v_{i+1},v_j)$. 

\begin{enumerate}
    \item[Case (1)] $SP(v_{i+1},v_j)$ is a line segment. If this is the case, then we will consider the extensions of both $e_i$ and $e_j$. If the extension of $e_j$ intersects the edge $e_i$ at a point $p_i$, then $t_i$ must be strictly closer to $v_{i+1u}$ than $p_i$ is to $v_{i+1}$. If the extension of $e_i$ intersects the edge $e_j$ at a point $p_j$, then $s_j$ must be strictly closer to $v_j$ than $p_j$ is to $v_j$. Otherwise, there is no constraint made. See the left image in Figure \ref{fig:hidden_set_dp}.
    
    \item[Case (2)] $SP(v_{i+1},v_j)$ contains at least one vertex (other than $v_{i+1}, v_j$). Let us call this vertex $u$. Consider the line parallel to $\overleftrightarrow{v_{i+1} v_j}$ that passes through $u$. If this line intersects the edge $e_i$ at a point $p_i$, then $t_i$ must be strictly closer to $v_{i+1}$ than $p$ is to $v_{i+1}$. If this line intersects the edge $e_j$ then $s_j$ must be strictly closer to $v_j$ than $p_j$ is to $v_j$. Otherwise, there is no constraint made. See the right image in Figure \ref{fig:hidden_set_dp}.
    
\end{enumerate}

\begin{figure}[ht]
    \centering
    \includegraphics[width=.6\textwidth]{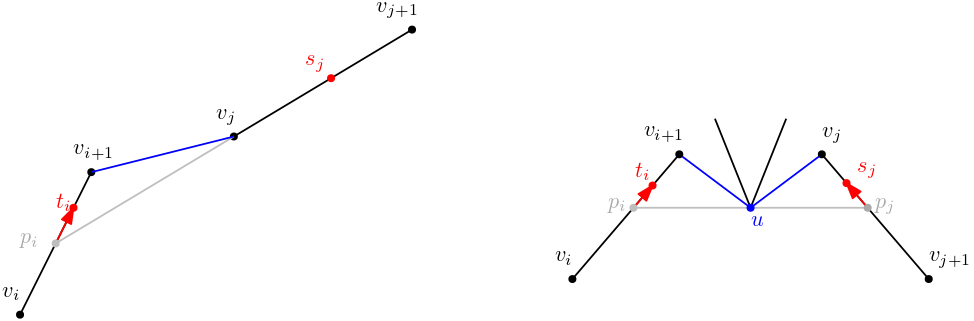}
    \caption{Cases for placements of $s_i$ and $t_j$ for edges $e_i = v_iv_{i+1}$ and $e_j= v_jv_{j+1}$ in $P$. Left: Case (1), right: Case (2). In considering the hidden subset: $i = i_{1}, j = i_{|I|}$.}
    \label{fig:hidden_set_dp}
\end{figure}


After completing all these pairwise comparisons, we take the most restrictive constraint for each $s_i$ (closest to $v_{i+1}$) and $t_i$ (closest to $v_i$). Set the point to halfway between this constraint and the respective vertex (or the midpoint of the edge if there is no constraint). Each constraint takes $O(1)$ to compute since we can use the SPT parent of either as the one vertex in Case (2). There are $\leq n$ constraints per point ($s_i$ and $t_i$), which means it takes a total of $O(n^2)$ time to compute $S$.

Now we show that such a set $S$ admits a hidden subset $H$ for every antichain $I$ such that $|H| = |I|$. Particularly, there are three conditions (a),(b),(c) such that there is at least one $H$ that satisfies both (a),(b) and one $H$ which satisfies both (a),(c). These can be the same subset satisfying all three. (a) For each edge $e_i \in L$, $H$ admits either $s_i$ or $t_i$. (b) The edge of lowest index $e_{i_{1}}$ admits $t_{i_{1}}$ into $H$, and (c) The edge with the highest index $e_{i_{|I|}}$ admits $s_{i_{|I|}}$ into $H$. 

We can show this by induction on $|I|$. If $|I| = 1$, then $e_{i_{1}} = e_{i_{|I|}}$. Therefore, to satisfy (a)(b) we use $\{t_{i_{1}}\}$. To satisfy (a)(c) we use $\{ s_{i_{|I|}} \} 
 = \{s_{i_{1}} \}$. Since the base case holds, we then show that if this holds for $|I| \in [1, m-1]$, then it also holds for $|I| = m$. 

For $|I| = m$, we have the same 2 cases as in the construction of $S$, but here we work with $e_{i_{1}}$ and $e_{i_{|I|}}$ instead of $e_{i}$ and $e_{j}$. Refer to Fig.~\ref{fig:hidden_set_dp} for a visual of the cases.

\begin{enumerate}
    \item[Case (1)] $SP(v_{i_{1}+1},v_{i_{|I|}})$ is a line segment. Therefore $v_{i_{1}+1}$ sees $v_{i_{|I|}}$. For $e_{i_{1}},e_{i_{|I|}}$ to not have an arc in $G$, either $v_{i_{1}}$ or $v_{i_{|I|+1}}$ lies above the line $\overleftrightarrow{v_{i_{1}+1},v_{i_{|I|}}}$. Otherwise, by chord property they would see each other. If this is the case, then the only way that $t_{i_{1}}$ could see $s_{i_{|I|}}$ is if one of these lies below the edge extension of whichever lies above $v_{i_{1}+1},v_{i_{|I|}}$, which we know is not the case from the constraints given in the construction above. See the left image of Fig.~\ref{fig:hidden_set_dp}. Since one must lie above the line $\overleftrightarrow{v_{i_{1}+1},v_{i_{|I|}}}$, either one or neither can see into the line segment $v_{i_{1}+1},v_{i_{|I|}}$.
    
    Consider the case where $t_{i_{1}}$ sees into ${v_{i_{1}+1},v_{i_{|I|}}}$. From the inductive assumption, we know there is a subset $H_{i_{1}}$ corresponding to $I_{i_{1}} =  I \setminus e_{i_{|I|}}$ that satisfies (a),(b). Since $e_{i_{1}}$ has a lower index than all other edges in $I_{i_{1}}$, we know that $t_{i_{1}}$ is admitted into $H_{i_{1}}$. Since $s_{i_{|I|}}$ cannot see into ${v_{i_{1}+1},v_{i_{|I|}}}$, it cannot see any other edge in $I_{i_{1}}$. Because $s_{i_{|I|}}$ has a higher index than all edges in $I_{i_{1}}$, $H_{i_{1}} \cup s_{i_{|I|}}$ forms a hidden subset of size $|I|=m$ that satisfies all three of (a),(b),(c). If instead $s_{i_{|I|}}$ sees into ${v_{i_{1}+1},v_{i_{|I|}}}$ or neither do, the argument is identical, except that $H_{i_{|I|}}$ from $I_{i_{|I|}} = I \setminus e_{i_{1}}$ will satisfy (a),(c) and then the union $H = H_{i_{|I|}} \cup \{ t_{i_{1}} \}$ will satisfy (a),(b),(c).

    \item[Case (2)] $SP(v_{i_{1}+1},v_{i_{|I|}})$ contains at least one vertex (other than $v_{i_{1}+1},v_{i_{|I|}}$). From the comparison between $e_{i_{1}}$ and $e_{i_{|I|}}$, we know that both $t_{i_{1}}, s_{i_{|I|}}$ lie to the left of the directed line parallel to $\overleftrightarrow{v_{i_{1}},v_{i_{|I|}}}$ through some vertex $u$ of $SP(v_{i_{1}+1},v_{i_{|I|}})$. This is because $SP(v_{i_{1}+1},v_{i_{|I|}})$ makes only left turns \cite[Lemma~2]{GhoshWeakVisible}, meaning $u$ must lie to the right of directed line $\overleftrightarrow{v_{i_{1}},v_{i_{|I|}}}$. $t_{i_{1}}$ and $s_{i_{|I|}}$ must both lie between these two lines. This further means no point in the interval $[t_{i_{1}},u)$ can see any point in the interval $(u, s_{i_{|I|}}]$, as this would violate the simplicity of the polygon. See the right image of Fig.~\ref{fig:hidden_set_dp}.
    
    This allows us to consider all the edges of $I$ in  interval $[t_{i_{1}},u)$, which we denote as $I_{i_{1}}$, and all the edges of $I$ in the interval  $[u,s_{i_{|I|}})$, which we denote as $I_{i_{|I|}}$. We know a hidden subset $H_{i_{1}}$ abiding by (a),(b) exists for $I_{i_{1}}$ and a hidden subset $H_{i_{|I|}}$ abiding by (a),(c) exists for $I_{i_{|I|}}$. Because no point in one interval sees any point in the other, $H_{i_{1}} \cup H_{i_{|I|}}$ must be a hidden subset of size $|I|=m$, abiding by all three (a),(b),(c).

\end{enumerate}

Therefore, since in all cases we can obtain a hidden subset $H$ of $S$ such that $|H| = |I|$, we know, by induction, that Lemma \ref{lem:superset_of_hidden} holds. We also know that such a set $S$ can be calculated in $O(n^2)$ time.
\end{proof}

\begin{theorem} \label{thm:hidden_set_dp}
    For a \wvp $P$, a hidden set of size greater than or equal to the size of the largest antichain in $G$ can be found in polynomial time.
\end{theorem}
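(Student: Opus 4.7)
The plan is to combine Lemma~\ref{lem:superset_of_hidden} with a maximum antichain computation on the edge-poset $G$. Explicitly, first compute the candidate set $S$ of $2n-2$ points in $O(n^2)$ time via Lemma~\ref{lem:superset_of_hidden}. Second, build the DAG $G$ whose nodes are the edges $e_1,\ldots,e_{n-1}$ of $P$ and whose arcs encode the strong-visibility order; this is polynomial time since pairwise strong visibility can be read off from the shortest-path trees $SPT(v_i)$ already computed in the proof of Lemma~\ref{lem:superset_of_hidden}. Third, find a maximum antichain $I$ in $G$. By Dilworth's theorem this is dual to minimum path cover, and as noted in Section~\ref{subsec:weaklyvis} the current best bound is $O(n^{2+o(1)})$ via maximum flow~\cite{ChenUpdated}; in any case this is polynomial, which is all the statement requires.

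Given $I$, the hidden set is extracted by turning the inductive existence argument in the proof of Lemma~\ref{lem:superset_of_hidden} into a recursive procedure. Sort the edges of $I$ by index as $e_{i_1},\ldots,e_{i_{|I|}}$. On a call \textsc{Build}$(I,\textit{tag})$, where \textit{tag} $\in\{(a,b),(a,c),(a,b,c)\}$ records which of the conditions (a), (b), (c) from the lemma the returned subset must satisfy, inspect the shortest path $SP(v_{i_1+1},v_{i_{|I|}})$. If it is a single segment (Case~1), decide by the visibility check whether $t_{i_1}$ or $s_{i_{|I|}}$ sees into the separating segment, recurse on the antichain with the other endpoint removed to obtain a set satisfying the appropriate subset of conditions, and adjoin the surviving endpoint. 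If the path has an interior vertex $u$ (Case~2), split $I$ at $u$ into $I_{i_1}$ and $I_{i_{|I|}}$, recurse with tag $(a,b)$ on the first and $(a,c)$ on the second, and take the union. The correctness of each step is exactly the argument already written in the lemma; the only new observation is that the decisions (which case applies, where $u$ lies, which endpoint sees through) are all answerable in $O(1)$ from the precomputed shortest-path trees.

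Finally, argue the bound and the polynomial runtime. By the lemma, each recursive call returns a hidden subset of $S$ whose size equals the size of the antichain it was handed, so \textsc{Build}$(I,(a,b,c))$ returns a hidden set $H \subseteq S$ with $|H|=|I|$, which equals the size of the maximum antichain in $G$. Since each recursive call either shrinks $|I|$ by one (Case~1) or partitions $I$ into two disjoint sub-antichains (Case~2), the recursion tree has $O(|I|)=O(n)$ leaves, and each node does $O(1)$ work beyond the lookups, so $H$ is produced in $O(n)$ time on top of the $O(n^2)$ preprocessing. Combining everything yields the hidden set in polynomial time.

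The main obstacle I would expect is not the overall strategy but the bookkeeping inside Case~2: one has to be careful that the splitting vertex $u$ and the corresponding separating line really do insulate the two sub-antichains from each other, so that conditions (b) for the left half and (c) for the right half suffice to guarantee that the union is hidden. This is precisely the content of the polygon-simplicity argument already given in the lemma, and invoking it verbatim is what makes the recursion go through; any algorithmic write-up needs to make sure the $u$ selected in the recursion matches the $u$ used to impose the Case~2 constraints during the construction of $S$, which is guaranteed because both are determined by the same shortest-path tree.
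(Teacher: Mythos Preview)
Your argument is correct for the theorem as stated (polynomial time), but it takes a genuinely different route from the paper. You explicitly compute a maximum antichain $I$ via flow, and then make the inductive proof of Lemma~\ref{lem:superset_of_hidden} algorithmic, recursively selecting the $s_i$/$t_i$ points along the two cases to build a hidden subset of $S$ of size exactly $|I|$. The paper instead \emph{never computes the antichain}: after building $S$, it inserts the $2n-2$ points as extra vertices to form a degenerate weakly visible polygon $P'$, and then runs the $O(n^2)$ dynamic program of Ghosh et al.\ \cite{GhoshWeakVisible} for maximum hidden \emph{vertex} set on $P'$ (excluding the endpoints of $W$). Lemma~\ref{lem:superset_of_hidden} guarantees that some hidden vertex set of size $|I|$ exists among the vertices of $P'$, so the DP returns at least that many.

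The trade-off is the point of the section (note its title, ``A hidden set without an antichain''): by sidestepping the antichain computation, the paper brings the total running time down to a clean $O(n^2)$, whereas your route inherits the $O(n^{2+o(1)})$ cost of the flow step. On the other hand, your approach is more self-contained, since it does not appeal to an external hidden-vertex-set DP, and it outputs a hidden set that is in explicit one-to-one correspondence with the antichain edges; the paper's DP may in principle return a larger hidden set drawn from $S\cup V(P)$, but with no such structural correspondence.
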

\begin{proof}
    From Lemma \ref{lem:superset_of_hidden}, we know that there exists a set of points $S$ of size $2n-2$ where for any antichain we can take a subset of $S$ as a hidden set of the same size as the antichain. This means that for the largest antichain, there exists a hidden subset of $S$ of the same size.

    Because the set of points $S$ are on the boundary on $P$, this means that we can define a degenerate polygon $P'$ with $3n-2$ vertices that includes the points $S$ as additional vertices to those in $P$. We include these points by simply splitting the edges that they lie on, changing nothing about the actual geometry of $P$. Because none of the points in $S$ are along $\W$, $P'$ remains a \wvp. We can apply the $O(n^2)$ dynamic programming (DP) algorithm from \cite{GhoshWeakVisible} to find a maximum hidden vertex set among these points. To be specific, we will exclude the vertices $v_1$ and $v_n$ from consideration in placing hidden vertices. Excluding the vertices of $\W$ is important for extending this to arbitrary simple polygons. Because the DP on $P'$ considers all points in $S$, this hidden vertex set must be at least as large as the hidden subset of $S$ corresponding to the largest antichain. 

    The total runtime of this reduces to simply finding $S$ and then applying the algorithm from \cite{GhoshWeakVisible}, both of which are $O(n^2)$. From earlier, we know that a hidden set that is the same size as the largest antichain of the edges is a 1/2-approximation for the maximum hidden set, so this provides a faster approach to finding an approximate maximum hidden set for \wvps and simple polygons in general by extension.    
\end{proof}

\subsection{Convex cover and hidden set in convex fans and monotone mountains}

A convex fan is a polygon $P$ which contains a convex vertex $v$ where for all points $p\in P$, $v$ sees $p$. Here, convex vertices can also include straight vertices, those with angle measure exactly 180$^\circ$.

\begin{theorem}\label{thm:confan}
    For any convex fan $P$ with $n$ vertices, a maximum hidden set and a minimum convex cover of $P$ can be found in $O(n^{2+o(1)})$ time. 
\end{theorem}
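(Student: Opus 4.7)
My plan is to reduce the problem on a convex fan to the polygon-weakly-visible-from-a-convex-edge framework of Section~\ref{subsec:weaklyvis}, and then to show that the usual factor-$2$ slack between the maximum-antichain and minimum-convex-cover bounds collapses on convex fans so that the WVP algorithm returns the exact optimum. Let $v$ be a convex vertex of $P$ that sees all of $P$, and let $e_v$ be an edge of $P$ incident to $v$. Since $v \in e_v$ and $v$ sees every point of $P$, the polygon is already weakly visible from $e_v$. If $e_v$ does not already satisfy the ``convex edge'' requirement of Section~\ref{subsec:weaklyvis}, I would split $v$ into two vertices joined by a short segment of length $\varepsilon$, obtaining a polygon $P'$ whose hidden-set and convex-cover numbers equal those of $P$ for all sufficiently small $\varepsilon$ and which is weakly visible from the new convex edge.

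Next, I would form the strong-visibility poset on the edges of $P'$ as in Section~\ref{subsec:weaklyvis}, with maximum antichain size $\alpha$, which by Dilworth's theorem equals the minimum path-cover size. Lemma~\ref{lem:superset_of_hidden} and Theorem~\ref{thm:hidden_set_dp} yield a hidden set of $P'$ of size $\alpha$, and from the minimum path cover together with a fan-augmentation construction I would build a convex cover of $P'$ of size $\alpha$. Combining these with the chain of inequalities $\alpha \le hs(P') \le cc(P') \le \alpha$ forces equality throughout, giving the exact optimum for both problems simultaneously, so that convex fans are in fact homestead polygons.

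The heart of the proof is the fan-augmentation step that turns a minimum path cover into a convex cover of the same size. For each chain of edges in the decomposition, I would form the ``fan piece'' consisting of all segments from $v$ to that chain of edges together with the appropriate angular wedge. The key observation is that in a convex fan, for any convex region $C \subseteq P'$, the set $\mathrm{conv}(C \cup \{v\})$ is again contained in $P'$, because every segment from $v$ to a point of $C$ lies in $P'$ by the fan property. The main obstacle is handling chains whose edges are not angularly contiguous around $v$: one must either show that an optimal path cover can always be chosen with every chain angularly contiguous (by an exchange argument swapping edges between chains in the decomposition), or that non-contiguous chains can be refined into contiguous pieces at no extra cost. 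The geometry of the convex fan, as opposed to a general WVP, is precisely what makes this argument go through.

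The runtime is dominated by computing the strong-visibility poset, the maximum antichain and a minimum path cover of it, and the hidden-set construction of Lemma~\ref{lem:superset_of_hidden}; all of these run in $O(n^{2+o(1)})$ time using the max-flow algorithm of \cite{ChenUpdated} as noted in Section~\ref{subsec:weaklyvis}, which matches the claimed bound.
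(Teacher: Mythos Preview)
Your approach matches the paper's: reduce to the weakly-visible-from-a-convex-edge framework, take the strong-visibility poset, use the maximum antichain for the hidden set and the minimum path cover for the convex cover, and cone each chain to the apex $v$. Two points are overcomplicated, though. First, the $\varepsilon$-splitting is unnecessary: the paper observes directly that both edges incident to $v$ are already convex, since a reflex vertex adjacent to $v$ would block $v$ from seeing part of the boundary, contradicting the fan property. Second, the ``main obstacle'' you raise about angularly non-contiguous chains is not actually an obstacle. A chain $e_{i_1},\dots,e_{i_k}$ in the poset has all its edges pairwise strongly visible, so in a simple polygon their convex hull $C$ lies in $P$; by your own observation $\mathrm{conv}(C\cup\{v\})\subseteq P$ as well, and this convex piece contains every triangle $\mathrm{conv}(e_{i_j}\cup\{v\})$ regardless of whether the $e_{i_j}$ are angularly consecutive. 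Since those triangles already decompose $P$ and the path cover hits every edge, the chain-pieces cover $P$ with no exchange argument or refinement needed. With these two simplifications your proof is exactly the paper's.
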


\begin{proof}
     We know that for either edge incident on $v$, this edge must be convex. Otherwise, $v$ would not see the points on the boundary after the reflex vertex, which contradicts the definition of a convex fan. This means that $P$ is weakly visible from a convex edge ($e_i$), and the edges of $P$ admit a partial ordering from the strong visibility relation from Browne, Kasthurirangan, Mitchell, and Polishchuk (see Section \ref{subsec:weaklyvis}).

Since each edge $e_i$ in $P$ strongly sees $v$, it must be the case that the $CH(e_i \bigcup v) \subset P$, where $CH$ indicates the convex hull. We can produce a decomposition of $P$ using these convex pieces. It also is the case that every set of edges that strongly see each other will also strongly see $v$, which means that for every chain in the poset, we can construct a convex piece that includes $v$. These convex pieces are a superset of the pieces in the decomposition corresponding to the edges in the chain, which means that they form a complete convex cover. This convex cover is equivalent to the size of the hidden set found from the anti-chain using the approach from Browne, Kasthurirangan, Mitchell, and Polishchuk (see Section~\ref{subsec:weaklyvis}). We give an example of such a set and convex cover in the right subfigure of Figure~\ref{fig:monotonefan}. This procedure takes time equivalent to the case of \wvps, $O(n^{2+o(1)})$.
\end{proof}

\begin{figure}[ht]
    \centering
    \includegraphics[width=.7\textwidth]{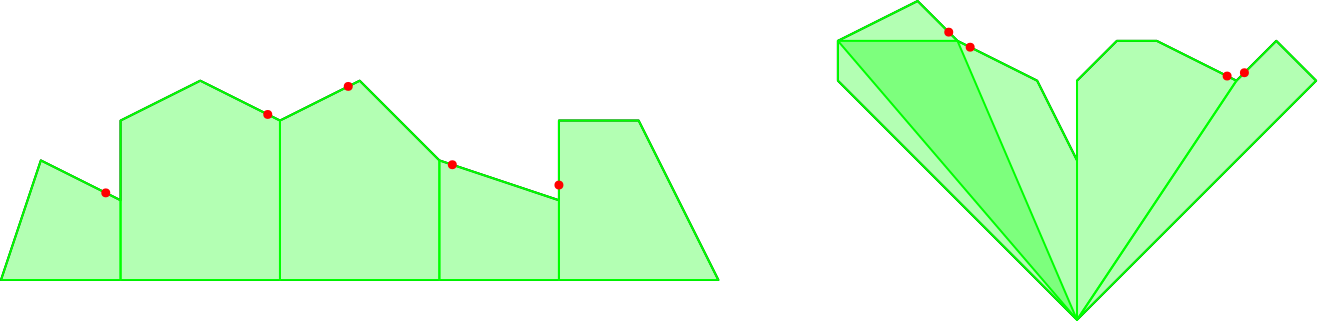}
    \caption{Showing optimal hidden sets and convex covers for a monotone mountain (Left) and a convex fan (Right)}
    \label{fig:monotonefan}
\end{figure}

Monotone mountains are simple polygons composed of an $x$-monotone chain and a single edge \W formed by the vertex with the lowest $x$-coordinate and the vertex with the highest $x$-coordinate. An $x$-monotone chain is a polygonal chain composed of vertices whose $x$-coordinates are non-decreasing. 

\begin{theorem}\label{thm:monmount}
    For any monotone mountain $P$ with $n$ vertices, a maximum hidden set and a minimum convex cover of $P$ can be found in $O(n^{2+o(1)})$ time. 
\end{theorem}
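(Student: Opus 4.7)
The plan is to mirror the proof of Theorem~\ref{thm:confan} for convex fans, replacing the role of the common vertex $v$ by the bottom edge $\W$. Since the chain is $x$-monotone, the leftmost vertex $v_1$ and the rightmost vertex $v_n$ are convex, and every point $p\in P$ is seen from the point on $\W$ directly below $p$ via a vertical chord (which lies in $P$ by $x$-monotonicity). Hence $P$ is weakly visible from the convex edge $\W$, and we may invoke the Browne--Kasthurirangan--Mitchell--Polishchuk framework from Section~\ref{subsec:weaklyvis}: the chain edges $e_1,\dots,e_{n-1}$ form a poset $G$ under strong visibility, and Theorem~\ref{thm:hidden_set_dp} already supplies, in $O(n^2)$ time, a hidden set of size equal to the maximum antichain in~$G$.

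The new work is to produce a convex cover of matching size, rather than just the 2-approximation. Fix a minimum chain cover $\{C_1,\dots,C_\ell\}$ of $G$, and for each chain $C_j=\{e_{i_1},\dots,e_{i_{k_j}}\}$ let $W_{C_j}\subseteq\W$ be the sub-segment of $\W$ spanning the $x$-range $[x_{i_1},x_{i_{k_j}+1}]$. Define
\[
K_{C_j} \ =\ \mathrm{CH}\!\left(\bigcup_{e\in C_j} e\ \cup\ W_{C_j}\right).
\]
For each chain edge $e_i$ the trapezoid $T_i$ obtained by projecting $e_i$ vertically down to $\W$ is contained in $K_{C_j}$ whenever $e_i\in C_j$, because $K_{C_j}$ is convex and contains both $e_i$ and its projection; and $\bigcup_i T_i = P$ by $x$-monotonicity. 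So once we check that $K_{C_j}\subseteq P$ for each $j$, we get a valid convex cover of $P$ of size $\ell$.

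The main obstacle is the containment $K_{C_j}\subseteq P$. The lower boundary of $K_{C_j}$ sits on $W_{C_j}\subseteq\W$, and the two side boundaries are vertical segments from $v_{i_1}$ and $v_{i_{k_j}+1}$ straight down to $\W$, both of which lie in $P$ by $x$-monotonicity. The upper boundary of $K_{C_j}$ is the upper hull of the chain vertices contributed by $\bigcup C_j$: its edges are chords $uu'$ between pairs of such chain vertices. Because every pair of edges in $C_j$ strongly sees each other, every such pair $u,u'$ is mutually visible, so each chord $uu'$ lies below the chain. Since the chain is the unique $x$-monotone upper boundary of $P$, this gives the pointwise bound ``chord $\leq$ chain'' at every $x$ in the span of the chord, and stitching across the whole upper hull yields the global containment. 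This is where the specific geometry of a monotone mountain does the work, and where the write-up would be most delicate.

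Combining Dilworth's theorem with the universal inequality $hs(P)\leq cc(P)$ gives
\[
hs(P)\ \leq\ cc(P)\ \leq\ \ell\ =\ |{\rm max\ antichain}|\ \leq\ hs(P),
\]
so all inequalities are equalities, and both the hidden set from Theorem~\ref{thm:hidden_set_dp} and the cover $\{K_{C_j}\}$ are optimal. The running time is dominated by computing the maximum antichain / minimum chain cover, which is $O(n^{2+o(1)})$ via the reduction to max-flow of~\cite{ChenUpdated} described in Section~\ref{subsec:weaklyvis}; all remaining work (building the poset, constructing the points of Lemma~\ref{lem:superset_of_hidden}, and forming the $K_{C_j}$) is $O(n^2)$.
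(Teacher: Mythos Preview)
Your proposal is correct and follows essentially the same approach as the paper: show that $P$ is weakly visible from $\W$ via vertical chords, invoke the Dilworth-based framework on the strong-visibility poset of the chain edges, and then argue that for each chain in the minimum chain cover the convex hull of its edges together with the corresponding portion of $\W$ stays inside $P$, so that the cover matches the antichain size. The paper phrases the key containment step slightly differently---it first argues that strong visibility between $e_i$ and $e_j$ extends to the vertical ``slabs'' below them, and then takes the union of slabs along a chain---whereas you work directly with the upper hull of $K_{C_j}$ and justify each hull chord via pairwise strong visibility; these are the same geometric content, and your write-up of the containment is in fact more explicit than the paper's.
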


\begin{proof}
    A similar argument as in Theorem~\ref{thm:confan} can be made for the case of monotone mountains. 
    
    Since the opposite chain (\W) of the $x$-monotone chain is simply an edge, every point along the boundary sees the corresponding point on \W with the same $x$-coordinate, since a piece of boundary would have to violate $x$-monotonicity to block this vision. This means that \W is a convex edge which weakly sees the entirety of $P$.

    It is also the case that if an edge $e_i$ strongly sees another edge $e_j$, $e_i$ must also strongly see the corresponding points to $e_j$ along \W. This follows from the $x$-monotocity as well as the chord property \cite{GhoshWeakVisible}[Lemma 1]. By the chord property, we know that no piece of boundary between the edges in clockwise order blocks $e_i$ from seeing the corresponding points of $e_j$ along \W. Therefore, either a piece of boundary from before $e_i$ must block it or a piece of boundary after $e_j$, which would violate $x$-monotonicity.

    Since the strong visibility relation between two edges also applies to the regions below these edges down to the corresponding points in \W, this allows us to decompose the polygon into these vertical "slabs", just like with convex fans. The union of these slabs is equal to $P$. Therefore, we can again use the chain cover of the edges to obtain a set of convex pieces that will now cover the entirety of $P$. The hidden set algorithm remains exactly the same. Therefore, since we have a hidden set and convex cover of the same size, these must both be optimal. This procedure takes time equivalent to the case of \wvps, $O(n^{2+o(1)})$. We give an example convex cover and hidden set in Figure~\ref{fig:monotonefan}.
\end{proof}

\subsection{Rocking chair polygons}

We give a generalization of monotone mountains which we refer to as rocking chair polygons. We define a rocking chair polygon $P$ as a polygon composed of a $x$-monotone chain $v_1, v_2, . . . v_t$ and a convex chain $v_t, v_{t+1}, ... v_1$. Without loss of generality, we can consider $v_1$ to be the point with the lowest $x$-coordinate and $v_t$ to be the point with the highest $x$-coordinate and that the vertices are in clockwise order.

\begin{theorem}
    For any rocking chair polygon $P$,  a maximum hidden set and a minimum convex cover of $P$ can be found in $O(n^{2+o(1)})$ time. 
\end{theorem}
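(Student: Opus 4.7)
The plan is to generalize the proof of Theorem~\ref{thm:monmount}: a monotone mountain is the special case of a rocking chair whose convex chain is a single edge, and the convex chain should now play the role that \W played there. The three ingredients I must carry over are (i) weak visibility of $P$ from a convex edge, (ii) the slab decomposition induced by the strong-visibility poset on the $x$-monotone edges, and (iii) the fact that whenever $e_i$ strongly sees $e_j$ this extends to a convex piece covering the slabs between them.

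First, since the $x$-monotone chain is $x$-monotone, a vertical ray from any point of that chain into $P$ cannot be blocked by the chain itself, so it exits $P$ through the convex chain on the opposite side. Combined with the convexity of the convex chain, this lets me identify a convex edge on the convex chain from which $P$ is weakly visible, placing $P$ within the framework of Section~\ref{subsec:weaklyvis} so that the poset machinery on the edges applies.

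Next, I would set up the strong-visibility poset on the edges of the $x$-monotone chain exactly as in the WVP setting, and for every chain $e_{i_1},\ldots,e_{i_k}$ in a minimum chain cover of that poset I would construct a convex piece as the convex hull of those edges together with the corresponding pieces of the convex chain opposite them. To show that this convex hull lies inside $P$, I would use the chord property (as in Theorem~\ref{thm:monmount}) to argue that the intermediate portion of the $x$-monotone chain lies on the correct side of the relevant visibility witness, and the convexity of the opposite chain to argue the same for the intermediate portion of that chain (since a convex chain bulges away from the polygon's interior and so cannot cross the witness). Because the $x$-monotone chain is $x$-monotone, these convex pieces, taken over all chains in the minimum cover, together tile all of $P$.

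With this established the argument closes just as in Theorem~\ref{thm:monmount}: the resulting convex cover has cardinality equal to the minimum chain cover, which by Dilworth's theorem matches the size of the maximum antichain, and the hidden-set construction of Section~\ref{sec:no_antichain} produces a hidden set of the same size, so the inequality $hs(P)\leq cc(P)$ forces both to be optimal, with runtime $O(n^{2+o(1)})$ dominated by computing the antichain and chain cover via the maximum-flow approach. The main obstacle I anticipate is verifying that the intermediate portion of the convex chain stays on the correct side of the visibility witness when the convex chain is not itself $x$-monotone, since then the ``corresponding piece'' of the convex chain opposite an $x$-monotone edge is not cleanly defined by a single vertical projection; I expect the convexity of the opposite chain together with the simplicity of $P$ to pin down a canonical choice, but this is the piece of the argument that will require the most care.
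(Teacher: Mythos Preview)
Your approach is genuinely different from the paper's, and the obstacle you flag at the end is real enough that the paper simply routes around it rather than confronting it.

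The paper does not work inside $P$ at all. Instead it takes $v_{\min}$, the lowest point of the convex chain, drops verticals from $v_1$ and $v_t$ to the horizontal line through $v_{\min}$, and obtains a monotone mountain $P'\supset P$ whose $x$-monotone chain coincides with that of $P$. Theorem~\ref{thm:monmount} is then applied verbatim to $P'$, with hidden points placed on the shared $x$-monotone chain (so they lie in $P$ and remain hidden there, since $P\subset P'$). Each convex piece $C'$ of the resulting cover of $P'$ is replaced by $C'\cap P$; this is still convex because $P$ is obtained from $P'$ by intersecting with the half-planes bounded by the edges of the convex chain, and it still covers $P$ since $P\subset P'$. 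That single observation, that the convex chain is convex so intersecting with $P$ preserves convexity, replaces the entire argument you were preparing about the ``intermediate portion of the convex chain staying on the correct side of the visibility witness.''

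Your direct route may be salvageable, but note that your step (i) is not yet justified: the vertical-ray argument shows every boundary point sees \emph{some} point of the convex chain, not that all of them see a common edge, so it does not by itself place $P$ in the class of polygons weakly visible from a single convex edge, and hence does not immediately license the poset machinery of Section~\ref{subsec:weaklyvis}. The paper's embedding trick sidesteps this too, since $P'$ is a monotone mountain and therefore automatically weakly visible from its base edge.
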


\begin{proof}
    This is simply a generalization of Theorem \ref{thm:monmount}. Consider the point with the lowest $y$-coordinate, $v_{min}$. We can draw a horizontal line through $v_{min}$ and consider the monotone mountain formed by extending vertical segments down from $v_1$ and $v_t$ onto this horizontal. This forms a monotone mountain $P'$. This monotone mountain can be covered with convex pieces using the algorithm from Theorem \ref{thm:monmount} and have hidden points placed similarly. Specifically, we will place the points on the x-monotone chain, so exclude the two vertical segments from consideration. Since the monotone mountain coverage algorithm from Theorem \ref{thm:monmount} clearly covers these vertical segments, we will still have a cover of $P'$ and a hidden set of the same size. While the hidden points of $P'$ are also hidden points in $P$, the convex pieces will not necessarily be constrained to $P$, such as in Figure \ref{fig:rocking_chair}. However, due to the convexity of the convex chain, if we consider $C = C' \cap P$ for every convex piece $C'$ of the convex cover of $P'$, all pieces $C$ will be convex. Since $P \subset P'$, $\bigcup_{C' \in cc(P')} (C \cap P) = P$. 

    \begin{figure}[ht]
        \centering
        \includegraphics[width=.5\textwidth]{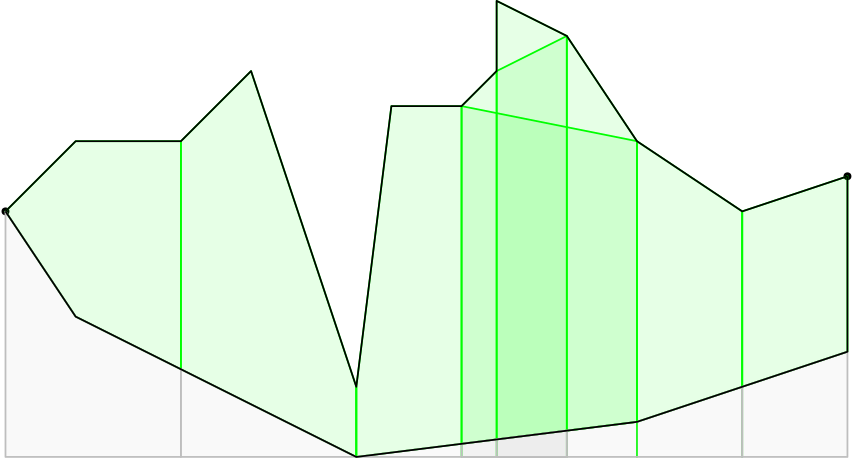}
        \caption{A convex cover and a hidden set in a rocking chair polygon (green and red) and the superimposed monotone mountain (grey)}
        \label{fig:rocking_chair}
    \end{figure}

\end{proof}

\subsection{Convex cover and hidden set in star-shaped polygons}
A star shaped polygon is one which has a nonempty kernel. The kernel of a polygon $P$ is the set of points $K$ for which every point $p\in P$ sees \textit{all} points in $K$. 
\begin{theorem}
    A 2-approximate convex cover and a 1/2-approximate hidden set in a star-shaped polygon can be found in polynomial time.
\end{theorem}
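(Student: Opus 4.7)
The plan is to reduce the star-shaped case to the convex fan case of Theorem~\ref{thm:confan} by cutting $P$ along a single chord through the kernel. First I would compute the kernel $K$ of $P$ in linear time and pick any point $v$ in its interior (if $K$ is lower-dimensional, pick $v \in K$ directly). Choose any line $\ell$ through $v$ and let $a,b \in \partial P$ be the first boundary crossings of the two opposite rays from $v$ along $\ell$. Star-shapedness from $v$ forces $\ell \cap \partial P = \{a,b\}$: any other intersection $c$ on $\ell$ would make the segment $vc$ exit $P$ at $a$ or $b$ before reaching $c$, contradicting $v \in K$. Hence $P \cap \ell$ is exactly the chord $ab$, and $P$ splits into two simple polygons $P_1,P_2$ with $P_i = P \cap \overline{H_i}$ for the two closed half-planes $\overline{H_1}, \overline{H_2}$ bounded by $\ell$.

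Next I would show each $P_i$ is a convex fan with apex $v$. Insert $v$ as a straight vertex on the chord edge of $P_i$, which qualifies as a convex vertex under the paper's convention. The apex $v$ sees every $p \in P_i$: since $v \in K$ the segment $vp$ lies in $P$, and since $v \in \ell$ and $p \in \overline{H_i}$ (both convex) the segment lies in $\overline{H_i}$, hence in $P \cap \overline{H_i} = P_i$. Applying Theorem~\ref{thm:confan} to each $P_i$ then yields, in $O(n^{2+o(1)})$ time, exact minimum convex covers $C_1,C_2$ and exact maximum hidden sets $H_1,H_2$.

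For the final output, return $C_1 \cup C_2$ as the cover of $P$ and the larger of $H_1, H_2$ as the hidden set. Starting from an optimal cover $C^*$ of $P$, the restrictions $\{C \cap \overline{H_i} : C \in C^*\}$ are convex pieces contained in $P_i$ that cover $P_i$, so $cc(P_i) \le cc(P)$ and thus $|C_1| + |C_2| \le 2\,cc(P)$, a $2$-approximation. For the hidden set, any two points in $H_i$ are hidden in $P$ as well: if $p,q \in P_i$ saw each other in $P$ the segment $pq$ would lie in both the convex $\overline{H_i}$ and in $P$, hence in $P_i$, so they would see each other in $P_i$. Splitting an optimal hidden set $H^*$ of $P$ between $P_1$ and $P_2$ then gives $hs(P_1) + hs(P_2) \ge hs(P)$, so $\max(|H_1|,|H_2|) \ge hs(P)/2$, a $1/2$-approximation.

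The main obstacle I expect is cleanly establishing the separation property $P_i = P \cap \overline{H_i}$, which genuinely uses the full kernel property of $v$ and not just that $v$ sees $a$ and $b$. Once that identity is in place, convexity of $\overline{H_i}$ mechanically transports both covers and hidden sets in both directions between $P$ and the $P_i$. Minor degeneracies such as $a$ or $b$ coinciding with a vertex of $P$, or $K$ being a single point or a segment, can be handled by an infinitesimal perturbation of the direction of $\ell$ within $K$ or by direct case analysis; the case where $P$ itself is convex is trivial since then $cc(P) = hs(P) = 1$.
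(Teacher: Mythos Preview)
Your proposal is correct and follows essentially the same approach as the paper: cut $P$ along a chord through a kernel point into two convex fans, solve each exactly via Theorem~\ref{thm:confan}, take the union of the covers and the larger of the hidden sets. The only cosmetic difference is that the paper routes the chord through an arbitrary convex vertex of $P$ and the kernel point $p$, while you take an arbitrary line through the kernel point; in both constructions the kernel point, inserted as a straight vertex on the chord, serves as the apex of each fan, and your write-up is in fact more explicit than the paper's in justifying $P_i = P \cap \overline{H_i}$ and the resulting approximation factors.
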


\begin{proof}\label{thm:star}
    We can use the results from Theorem~\ref{thm:confan} to accomplish this. The kernel of a star shaped polygon can be found in $O(n)$ time \cite{linearKernel}, thus allowing us to find a point $p$ in $P$ which sees every other point in $P$. We can consider an arbitary convex vertex $v$ of $P$ and extend the line segment $vp$ until it forms a chord of $P$. This chord partitions $P$ into two (degenerate) convex fans. We can use the algorithm from Theorem~\ref{thm:confan} on both of these, which runs in $O(n^{2+o(1)})$ time. We take the larger of the two hidden sets and the union of the convex pieces, which yield a 2-approximation for convex cover and a 1/2 approximation for hidden set. 
\end{proof}

\subsection{Convex cover and hidden set in weak visibility polygons}

A weak visibility polygon is a polygon $P$ for which a chord of $P$ weakly sees all of $P$. This is a relaxation of the requirements for a \wvp. 

\begin{theorem}
    A 4-approximate convex cover and a 1/4-approximate hidden set in a weak visibility polygon can be found in polynomial time.
\end{theorem}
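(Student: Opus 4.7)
The plan is to reduce the problem to two instances of the algorithm from Section~\ref{subsec:weaklyvis} by cutting along the chord. Let $c$ be a chord of $P$ that weakly sees $P$; cutting $P$ along $c$ produces two sub-polygons $P_1$ and $P_2$ sharing $c$ as a common edge. I would first verify that each $P_i$ is itself weakly visible from $c$: any $q\in P_i$ is seen in $P$ from some $p\in c$, and the segment $pq$ meets $c$ only at $p$, so it stays inside $P_i$. The interior angle of $P_i$ at each endpoint of $c$ is a sub-angle of the interior angle of $P$ there (or exactly $180^\circ$ when the endpoint lies in the relative interior of an edge of $P$), so $c$ may be treated as a convex edge of both $P_1$ and $P_2$, and the $2$-approximate convex cover and $1/2$-approximate hidden set algorithms of Section~\ref{subsec:weaklyvis} apply to each side.

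For the hidden set I would return the larger of the two sets produced. Writing an optimal hidden set $H^\ast$ of $P$ as $H^\ast=H_1^\ast\cup H_2^\ast$ with $H_i^\ast=H^\ast\cap P_i$, each $H_i^\ast$ is hidden in $P_i$ since any segment inside $P_i$ also lies in $P$, giving $hs(P_i)\ge|H_i^\ast|$. The returned set then has size at least $\tfrac12\max_i hs(P_i)\ge\tfrac12\max_i|H_i^\ast|\ge\tfrac14|H^\ast|$, a $1/4$-approximation.

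For the convex cover I would return the union of the two covers, whose total size is at most $2(cc(P_1)+cc(P_2))$; it therefore suffices to show $cc(P_i)\le cc(P)$. Given an optimal cover $\{C_1,\ldots,C_k\}$ of $P$ with $k=cc(P)$, I would intersect the line $\ell$ through $c$ with each convex piece $C_j\subseteq P$. Because $C_j$ is convex, $C_j\cap\ell$ is a single line segment, and because $C_j\subseteq P$ that segment must sit inside a single connected component of $P\cap\ell$. If the component is $c$ itself, then $\ell$ splits $C_j$ into two convex halves, one going to cover $P_1$ and the other to cover $P_2$; otherwise $C_j$ lies entirely on one side of $c$ and contributes to exactly one of the two covers. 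Either way $C_j$ contributes at most one convex piece to each sub-polygon, so $cc(P_i)\le k$, and the union of the two $2$-approximate covers has size at most $4\,cc(P)$.

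The main obstacle is verifying that each convex half $C_j\cap H^\pm$ really lies inside the corresponding sub-polygon $P_i$, rather than wandering into the opposite component of $P\setminus c$ via a second chord of $P$ along $\ell$. I would handle this by noting that $P_1$ and $P_2$ are the closures of the two connected components of $P\setminus c$, and that each open half $C_j\setminus\ell$ is a connected set attached to one consistent local side of $c$; connectedness then forces it to remain in the corresponding component of $P\setminus c$, which is exactly the component defining $P_1$ or $P_2$.
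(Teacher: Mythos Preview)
Your approach is the same as the paper's: find a weakly-visible chord, cut $P$ into two polygons each weakly visible from that chord as a convex edge, run the algorithm of Section~\ref{subsec:weaklyvis} on each half, and output the larger hidden set and the union of the two covers. The paper simply cites Ghosh et al.\ for locating the chord in $O(n^2)$ time (which you omit) and then invokes the star-shaped argument verbatim.

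Where you diverge is in the bookkeeping for the factor~$4$. You prove $cc(P_i)\le cc(P)$ by slicing each optimal piece with the line through $c$; this is correct, and your last paragraph handles the only real subtlety. The paper does not need this: because the algorithm of Section~\ref{subsec:weaklyvis} returns $|C_i|\le 2|H_i|$, one gets directly
\[
|C_1|+|C_2|\;\le\;2\bigl(|H_1|+|H_2|\bigr)\;\le\;4\max_i|H_i|\;\le\;4\,hs(P)\;\le\;4\,cc(P),
\]
and the hidden-set bound follows from the same chain read backwards. Your route is more modular (it would work with any black-box $2$-approximation on each half), while the paper's is shorter given the specific guarantee already established.

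One genuine gap: your claim that $c$ is a convex edge of both $P_i$ does not follow from ``sub-angle of the interior angle of $P$''. If an endpoint of $c$ is a \emph{reflex} vertex of $P$, the two sub-angles sum to more than $180^\circ$ and one of them can itself exceed $180^\circ$. For instance, in an L-shaped hexagon the single reflex vertex $r$ sees all of $P$, so any chord with one endpoint at $r$ is weakly visible; taking the other endpoint very close to one of the edges incident to $r$ makes one of the two sub-polygons reflex at $r$, and then $c$ is not a convex edge there. The fix is easy---extend $c$ maximally so its endpoints lie in the relative interiors of edges (the extension still weakly sees $P$), or check that the chord returned by Ghosh et al.\ already has this property---but as written the sentence is not a proof.
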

\begin{proof}
    This can be done nearly identical to Theorem~\ref{thm:star}. Ghosh et al. \cite{GhoshWeakVisible} provide an algorithm that finds a chord which sees all of $P$ (if it exists) that runs in $O(n^2)$ time. This allows us to split $P$ into two subpolygons that are \wvps. This means we can apply the algorithm from Browne, Kasthurirangan, Mitchell, and Polishchuk\cite{Browne2023} to achieve a 4-approximation (and a 1/4 approximation), again taking the better of the two hidden sets (1/2 approximate) and all the convex pieces. 
\end{proof}

\subsection{Convex cover and hidden set in simple orthogonal polygons}

Simple orthogonal polygons are simple polygons for which all angles are either $90^\circ$ or $270^\circ$. We give a 2-approximation for convex cover and a 1/2-approximation for hidden set.

\begin{theorem}
    A 2-approximate convex cover and a 1/2-approximate hidden set in a simple orthogonal polygon can be found in polynomial time.
\end{theorem}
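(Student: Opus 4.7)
The plan is to follow the same template as the preceding theorems in this section: decompose $P$ into sub-pieces of a subclass for which a good algorithm is already known, then assemble a global convex cover and hidden set. For orthogonal polygons the target factors are $2$ and $1/2$, matching the \wvp factors from Section~\ref{subsec:weaklyvis}, so the decomposition must avoid introducing further loss over the WVP subroutine.

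First I would apply Suri's window partition from a convex ($90^\circ$) vertex $v$ of $P$, exactly as in the reduction for general simple polygons (Section~\ref{subsec:simple_poly}). Because $P$ is orthogonal and $v$ sits at a rectilinear corner, every window in the partition should be axis-aligned, and each resulting region is weakly visible from a convex edge, hence a \wvp. To each region I apply the Browne--Kasthurirangan--Mitchell--Polishchuk algorithm to obtain a $2$-approximate convex cover and a $1/2$-approximate hidden set, and output the union of these across regions.

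The analysis hinges on two orthogonality-specific geometric claims: (i) no convex piece of an optimal cover of $P$ crosses more than one region of the partition; and (ii) any two distinct regions are mutually visually independent. In general simple polygons (i) is only bounded by $3$ and (ii) requires a $4$-coloring, which accounts for the factors $3$ and $4$ lost in Section~\ref{subsec:simple_poly}. In the orthogonal setting I would argue that each axis-aligned window chord acts as both a convex barrier (so no convex body can straddle it without leaving $P$) and a visibility barrier (so points on opposite sides cannot see one another), eliminating both losses. Granting (i) and (ii), the union of per-region covers has size at most $2\,cc(P)$ and the union of per-region hidden sets has size at least $\tfrac12\,hs(P)$, and the whole procedure runs in $O(n^{2+o(1)})$ time.

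The main obstacle is precisely establishing (i) and (ii); these are what break in the non-orthogonal case, so the proof must really use the rectilinear structure. The hard step will be to show that a horizontal (resp.\ vertical) window chord, bounded on one end by a reflex vertex of $P$ and on the other by an edge of $P$, cannot be crossed by a convex body without the body exiting $P$ through a nearby vertical (resp.\ horizontal) wall, and likewise that any sight-line from one side of the chord to the other must be blocked by a wall. If those two local arguments go through, the approximation analysis follows immediately, and combining them with the $O(n^{2+o(1)})$ WVP subroutine from Section~\ref{subsec:weaklyvis} yields the claimed polynomial-time algorithm.
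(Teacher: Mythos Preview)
Your approach diverges completely from the paper's, and the two geometric claims you isolate as ``the main obstacle'' are in fact false, so the argument cannot be completed along these lines.

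First, windows in Suri's partition are \emph{not} axis-aligned in orthogonal polygons. A window of the visibility polygon of a point $v$ is a chord along the ray from $v$ through a reflex vertex; even when $v$ is a $90^\circ$ corner of an orthogonal polygon, that ray is generically diagonal. Already in an L-shaped polygon the unique window from a far corner is a diagonal segment. Second, even if a window happened to be an axis-aligned chord, it is still an \emph{interior} chord of $P$: points on opposite sides of it and close to it see one another directly, and a convex piece (say a thin rectangle transverse to the chord) can straddle it while staying inside $P$. The reflex vertex at one endpoint of the window does not obstruct crossings that occur elsewhere along the chord. So neither (i) nor (ii) holds, and the factor-$3$ and factor-$4$ losses of Section~\ref{subsec:simple_poly} do not evaporate for orthogonal inputs.

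The paper instead bypasses the window-partition machinery entirely. For the cover it invokes Franzblau's $O(n\log n)$ algorithm, which already gives a $2$-approximate axis-aligned rectangle cover of any simple orthogonal polygon; rectangles are convex, so this is a $2$-approximate convex cover for free. For the hidden set it places one candidate point on the midpoint of each horizontal edge, splits these into ``top'' edges and ``bottom'' edges, observes that within each group no two midpoints can see each other, and proves by an easy induction on the horizontal-slab decomposition that the number of slabs is at most one less than the total number of (non-collinear) horizontal edges. Taking the larger of the two groups then yields a hidden set of size at least half the rectangle cover, hence a $1/2$-approximation. The whole thing runs in $O(n\log n)$ time and never touches the \wvp subroutine.
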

\begin{proof}
    
Franzblau \cite{Franzblau} gave an $O(n \log n)$ algorithm which finds a 2-approximation for finding a minimum number of axis-aligned rectangles to cover a simple orthogonal polygon. Since axis-aligned rectangles are convex, this also yields a convex cover. This algorithm begins by extending all horizontal edges to form a partition of the polygon into axis-parallel rectangles. Although unnecessary for improving the 2-factor, the algorithm then extends these rectangles maximally in both vertical directions and removes any rectangle which are made redundant i.e. are completely covered by other rectangles in the set.

For simple orthogonal polygons, we can find a hidden set of size at worst 1/2 the size of this. We simply place two sets of hidden points on all non-collinear horizontal edges. In the case of collinearities within a set, just choose 1 from each group of pairwise collinear edges. Ignore collinear edges that do not see each other. The two sets comprise of one for the ``top edges'' i.e. those which bound $P$ from above and ``bottom edge'' that bound from below. If we place the hidden points at the midpoints of each edge, we know that no two points on top edges will see each other because the line segment between them must pass through the outside of $P$ near whichever of the two has a lower $y$-coordinate. By symmetry, the same applies to the bottom edges. We choose whichever of the two hidden sets is larger.

We show that there is at most $2 hs(P) - 1$ pieces in the convex partition of size $p$, specifically one less than the total number of (noncollinear) top edges and bottom edges. Let us say that the total number of these horizontal edges is $e$, so $p \leq e-1$. We can show this by induction on $p$. This is clearly holds for when $p=1$, as a rectangle with two edges and one piece, $1 \leq 2-1 = 1$. For the inductive step, consider the lowest rectangle in a polygon $P$ with $k+1$ rectangles in its partition. Consider the polygon(s) left when the lowest rectangle is removed. We will call this $P'$ if it is a single polygon, or $P_1', P_2', . . ., P_m'$ if there are $m$ polygons left. There is two cases for this. In the first case, the top edge of the removed rectangle is a proper subset of a lowest bottom edge of $P'$. See the left subfigure of Figure~\ref{fig:orthog_hidden}. We know that for $P'$, $p' \leq e'-1$. We know that $P$ has an additional piece and an additional bottom edge, which means that $p-1 \leq e-1-1$ which in turn means $p \leq e-1$.

\begin{figure}[ht]
    \centering
    \includegraphics[width=.7 \textwidth]{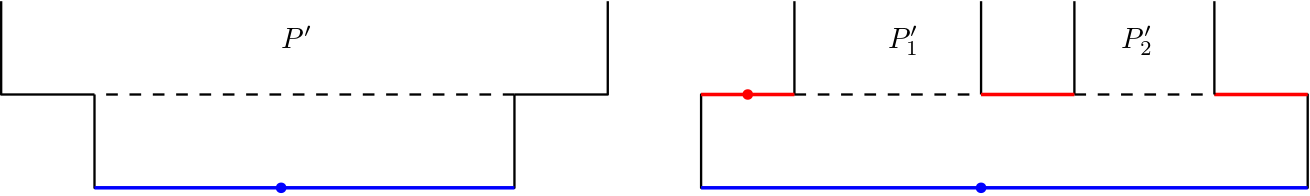}
    \caption{Showing the two cases for the induction. Left: Case 1, the top of the lowest rectangle is a subset of a lowest edge of $P'$. Right: Case 2, the top of the lowest rectangle is a superset of some lowest edges of $P_1' , P_2', . . . , P_m'$}
    \label{fig:orthog_hidden}
\end{figure}

In the second case (see the right subfigure in Figure~\ref{fig:orthog_hidden}), the top edge of the removed rectangle is a proper superset of some nonempty set of the lowest edges from $P_1', P_2', ..., P_m'$. We know that each of $P_i'$ only shares 1 contiguous interval with the top of the removed rectangle, since otherwise $P$ would have a hole. We know that for each $P_i'$, $p_i' \leq e_i'-1$. We know that $P$ contains 2 (noncollinear) edges not in any $P_i'$, a top edge and a bottom edge. The top edge of the removed rectangle cannot be fully covered by the $P_i'$s, since then there would be no splitting of $P$ to create the removed rectangle. Each $P_i$ contains one edge that is along the top edge of the removed rectangle (thus is not a true edge of $P$), which means that $e = 2+ \sum (e_i'-1)$, and the partition includes all, so $p = 1+\sum p_i'$. From the previous inequality, we have:

$$1 + \sum p_i' \leq 1+ \sum (e_i'-1)$$
$$p \leq e-1$$

Therefore, since this holds for both cases, the claim holds by induction.

Therefore, simply taking the larger of the two sets of hidden points yields a 1/2 approximate solution for maximum hidden set and the set of rectangles form a 2-approximation for minimum convex cover. This takes a total of $O(n \log n)$ time for both, as the hidden set can simply be implemented by sorting the horizontal edges by $y$-coordinate and placing candidate hidden points on each edge that is not collinear with its predecessor.

\end{proof}

\subsection{Funnel Polygons}\label{sec:funnel}

These results and those provided in Section \ref{sec:pseudtri} were originally presented in an Arxiv submission by Browne \cite{browne2023convex}. We omit the pseudocode given there, as well as the results on the restricted variants.

\textit{Funnel polygons} are defined as simple polygons composed of a convex edge $e_n = v_nv_1$ with two reflex chains $R_1 = v_1,v_2,...,v_t$ and $R_2 = v_t,...,v_{n-1},v_n$. Assume without loss of generality that $t \geq n-t$, i.e. $|R_1| \geq |R_2|$. We will also assume that each reflex vertex is strictly reflex, i.e. no three vertices within a chain are collinear and there is a turn at every vertex. If this is not the case, we can simply remove the vertices that are not strictly reflex. Funnel polygons are also sometimes referred to as \textit{tower polygons}.

We will use a recursive procedure to compute the hidden set and convex cover, and thus will be able to prove the correctness of the algorithm using induction. Our algorithm gives both a hidden set $H$ and a convex cover $C$, but can be implemented to only find one. The algorithm finds both of these in $O(n)$ time.

\begin{theorem}\label{thm:funnel}
For any funnel polygon $P$, a hidden set $H$ in $P$ and a convex cover $C$ of $P$ such that $|H| = |C|$ can be found in linear-time. 
\end{theorem}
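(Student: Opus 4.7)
The plan is to prove the theorem by induction on the number of reflex vertices of $P$, peeling at each step one convex piece together with a matching hidden point and recursing on two strictly smaller funnel polygons. For the base case, $P$ has no reflex vertex, so $P$ is a triangle; I would take $H=\{c\}$ for any interior point $c$ and $C=\{P\}$, giving $|H|=|C|=1$.

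For the inductive step I would split $P$ by a single chord. Since $|R_1|\geq |R_2|$, WLOG $R_1$ contains a reflex vertex; let $v_2$ be the one adjacent to $v_1$. I would extend the supporting line of edge $v_1 v_2$ past $v_2$ into $P$; because $v_2$ is reflex, this ray enters $P$ and first exits $\partial P$ at a point $q$ on some edge of $R_2$. The chord $v_2 q$ partitions $P$ into an upper sub-funnel $P_U$ (apex $v_t$, base $v_2 q$) and a lower sub-funnel $P_L$ (base $e_n$, with $v_2$ absorbed as a $180^\circ$ straight vertex since $v_1,v_2,q$ are collinear). A direct count shows each has strictly fewer reflex vertices than $P$, so by induction I would obtain $(H_U,C_U)$ and $(H_L,C_L)$ with $|H_U|=|C_U|$ and $|H_L|=|C_L|$, and output $H=H_U\cup H_L$ together with $C=C_U\cup C_L$.

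The equality $|H|=|C|$ and the covering property $\bigcup C = P$ are then immediate from the corresponding inductive statements for $P_U$ and $P_L$. The main obstacle will be to certify that $H$ is hidden in all of $P$, not merely within each sub-funnel: the chord $v_2 q$ is a virtual line rather than a wall, so a point of $H_U$ could in principle see a point of $H_L$ along a segment that crosses the chord and stays inside $P$. To surmount this I would strengthen the inductive hypothesis so that each returned hidden point is placed in a small pocket tucked against a reflex vertex of its sub-funnel; in particular, the point associated to the peel step in $P_U$ would lie in the shadow cast by $v_2$, so that $v_2$ itself geometrically blocks every potential cross-chord line of sight in $P$. Propagating this pocket invariant through the recursion is the technical heart of the proof, and it leans on the fact that every reflex vertex of a funnel carves out a small neighborhood invisible from the opposite side of its extended tangent line.

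The linear-time bound would then follow by charging each ray-shoot that locates $q$ to a vertex of $R_2$ that is absorbed during that step and never revisited as a candidate $q$-edge, giving amortized $O(1)$ work per recursive call and $O(n)$ total; maintenance of the pocket invariant adds only constant work per step since each new hidden point is a local perturbation of a reflex vertex.
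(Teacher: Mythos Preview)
Your approach differs substantially from the paper's. The paper pairs opposite edges $e_i$ and $e_{n-i}$ and checks whether each pair spans a convex quadrilateral inside $P$. In the easy case (all pairs good) the cover consists of those quads together with triangles for the surplus edges of the longer chain, and the hidden points are simply the midpoints of the edges of $R_1$; these are automatically mutually invisible because $R_1$ is a reflex chain, so no cross-piece argument is ever needed. In the hard case the first failing pair is located and the algorithm recurses on a \emph{single} smaller funnel $P'$, giving the recurrence $T(n)=T(n-1)+O(1)$. By contrast, you extend the first edge, split into \emph{two} funnels $P_U,P_L$, and take the union of the two recursive hidden sets.

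You correctly isolate the central difficulty with your scheme: the chord $v_2q$ is not a wall, so points of $H_U$ may see points of $H_L$. However, the ``pocket invariant'' you propose is not developed enough to close the gap, and there are concrete obstructions. First, your own base case is a triangle, which has no reflex vertex and hence no pocket; you would have to specify where that lone point goes and why that choice interacts correctly with the chord one level up. Second, the assertion that $v_2$ ``blocks every potential cross-chord line of sight'' is false as stated: $v_2$ is an endpoint of the chord, so it obstructs only sightlines passing through $v_2$, not those crossing the relative interior of $v_2q$. Third, and most seriously, the invariant must survive arbitrarily many recursive levels simultaneously. Take a funnel with $|R_2|=1$: your recursion peels off a stack of triangles $T_1,T_2,\dots$ that all abut the single edge of $R_2$, and you must place one hidden point per triangle so that the entire collection is mutually hidden in $P$. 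That is exactly the original problem restated; nothing in the pocket invariant as described forces it to succeed, and in nearly-degenerate funnels adjacent triangles in this stack are almost entirely mutually visible. The paper avoids all of this by never merging hidden sets across a chord: every hidden point lives on the fixed reflex chain $R_1$, so mutual invisibility follows immediately from reflexivity, and the recursion produces only one subproblem.
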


\begin{proof}
The algorithm essentially determines a convex decomposition (with Steiner points) such that for each piece in the decomposition, a hidden point can be placed along either $R_1$ or $R_2$ corresponding to it. 

The algorithm considers two cases. The first, easier case, is where all of the edges $e_i, e_{n-i}$ for $i \in [0,n-t]$ strongly see each other, i.e. they form a convex quadrilateral $v_i,v_{i+1},v_{n-i},v_{n-i+1}$ that is within the polygon $P$. We will refer to this as the Case 1 and the case in which this is not true as Case 2. We give two examples of Case 1 in Figure \ref{fig:funnel_easy}. If this is the case, then we can simply place a hidden point on each midpoint of $R_1$ (the longer chain) and connect the corresponding edges $e_i, e_{n-i}$ until there are no more edges to connect across with from $R_2$. At this point, we can simply partition the remaining region into triangles between the remaining edges of $R_1$ and the vertex $v_{t+1}$.

\begin{figure}[ht]
    \centering
    \includegraphics[width=.5\textwidth]{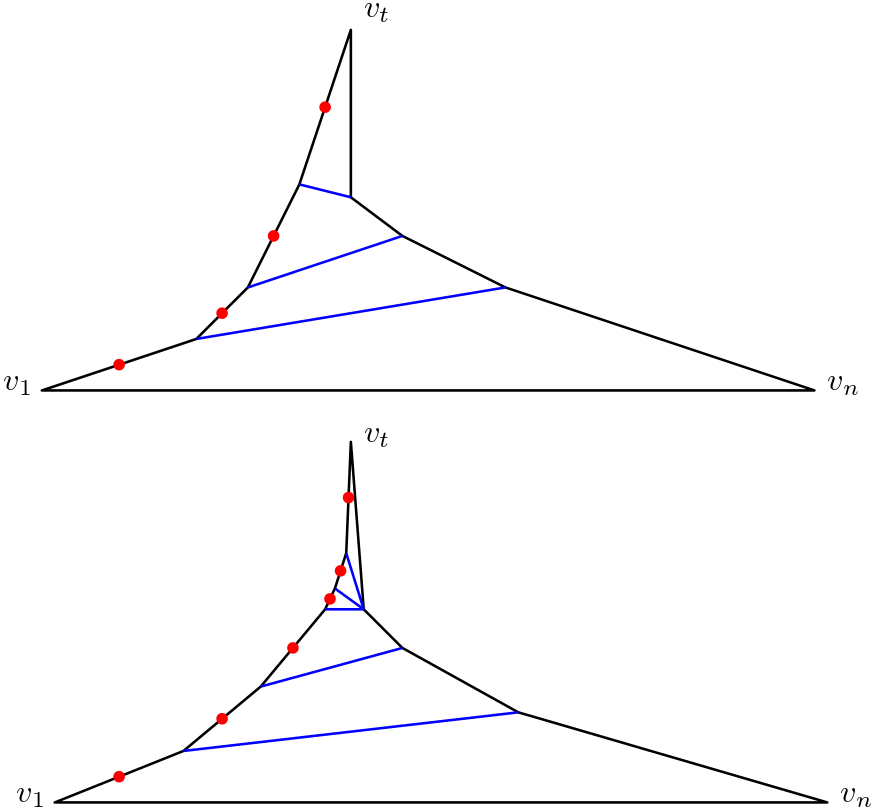}
    \caption{Case 1 for the funnel polygons.}
    \label{fig:funnel_easy}
\end{figure}

From here, we can use an induction on $n$, the number of vertices, to cover Case 2 where not all edge pairs satisfy the strong visibility requirement of Case 1. Clearly, when $n=3$, we have an instance of Case 1 and thus already have a solution. From here we just need to show that if it holds for $n \in [3,k]$, then it also holds for $n = k+1$. We can assume that for when $n=k+1$, we do not have an instance of Case 1 since we have already shown that. Let $e_i,e_{n-i}$ be the lowest $i$ pair for which $v_i,v_{i+1},v_{n-i},v_{n-i+1}$ is not within $P$. 

If $v_i$ does not see $v_{n-i+1}$, then the pair $e_{i-1} e_{n-i+1}$ would also not have their quadrilateral within $P$ and $i-1 < i$, so $v_i$ sees $v_{n-i+1}$. If $v_{i+1}$ does not see $v_{n-i}$, then either $v_i$ does not see $v_{n-i}$ or $v_{n-i+1}$ does not see $v_{i+1}$ because if there is an internal vertex $u$ in the shortest path from $v_{i+1}$ to $v_{n-i}$, it must be from either the chain of $R_1$ after $v_{i+1}$ or the chain of $R_2$ before $v_{n-i}$. This follows from \cite[Lemma 1]{GhoshWeakVisible}, since funnel polygons are a subclass of polygons weakly visible from a convex edge. Because $R_1$ is a reflex chain, if $u \in R_1$ then $u$ must also be in the shortest path from $v_i$ to $v_{n-i}$. If $u \in R_2$, then $u$ must be in the shortest path from $v_{n-i+1}$ to $v_{i+1}$. See Figure \ref{fig:funnel_block} for an example. Since these shortest paths have internal vertices, the endpoints cannot see each other.

\begin{figure}[ht]
    \centering
    \includegraphics[width = .5\textwidth]{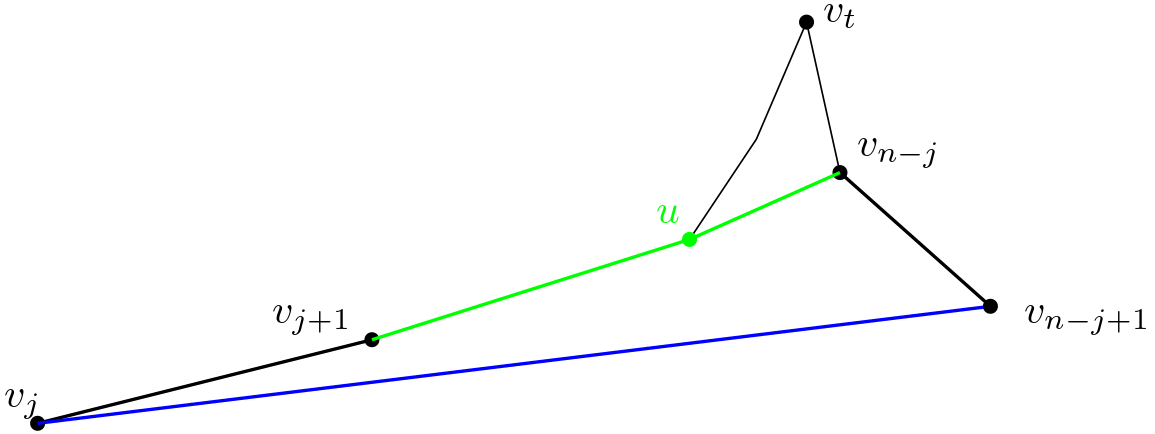}
    \caption{Lemma 1 of \cite{GhoshWeakVisible} implies that if $v_{i+1}$ does not see $v_{n-i}$ then either $v_i$ does not see $v_{n-i}$ or $v_{n-i+1}$ does not see $v_{i+1}$.}
    \label{fig:funnel_block}
\end{figure}

\begin{figure}[ht]
    \centering
    \includegraphics[width = .5\textwidth]{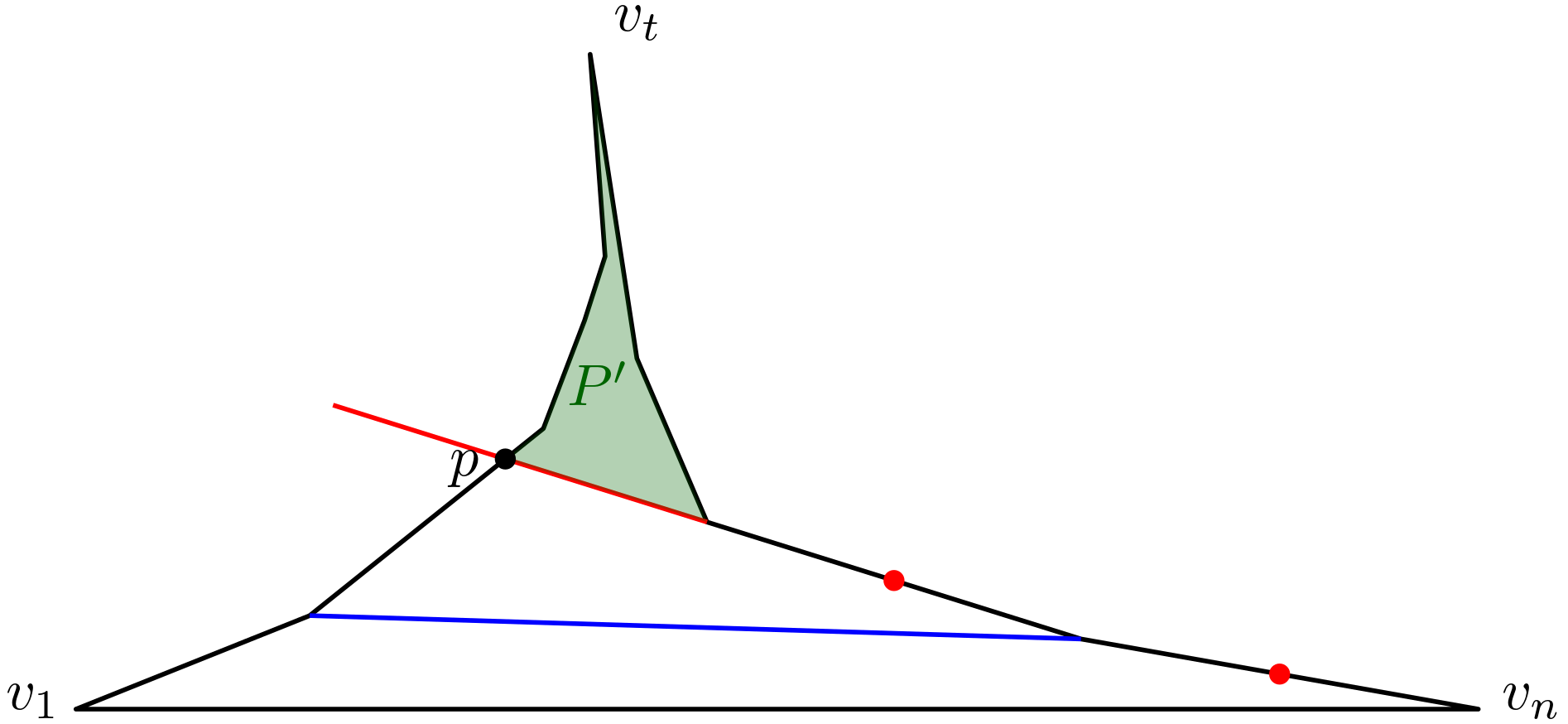}
    \caption{Case 2 for funnel polygons.}
    \label{fig:funnel_recurse}
\end{figure}

Therefore, since in all cases we find a convex cover and hidden set of the same size, Theorem \ref{thm:funnel} holds. The algorithm runs in $O(n)$ time. First, note that determining whether we are in Case 1 or Case 2 only takes $O(1)$ time per edge in $P$. This is because the only way to be in Case 2 is if the edge extension of one edge intersects with another edge, which takes $O(1)$ to determine. We do not need to compare with any other edges of the polygon. $P'$ has complexity at most 1 less than $P$, which is only the case if the index $i$ at which we form $P'$ is 1, giving us the following recurrence:
$$T(n) = T(n-1) + O(1)$$

The base case is $T(1) = O(1)$, which solves to a total runtime of $O(n)$.

\end{proof}

\subsection{Pseudotriangles}\label{sec:pseudtri}
Pseudotriangles are defined as simple polygons containing exactly 3 convex vertices, which means that it can be represented as three reflex chains $R_1 = v_1,v_2,...,v_t$, $R_2 = v_t, ... v_{s-1}, v_s$, $R_3 = v_s, . . .  v_{n-1}, v_n$. The three convex vertices here are $v_1, v_t,$ and $v_s$. Without loss of generality, we will consider the cardinalities of the chains to be such that $R_1 \geq R_2 \geq R_3$. Funnel polygons are a subclass of pseudotriangles, where $|R_3| = 1$. We will only consider pseudotriangles that are not funnel polygons, as Section \ref{sec:funnel} already gives a linear-time algorithm for funnel polygons.

\begin{theorem}\label{thm:pseudo}
For any pseudotriangle $P$, a hidden set $H$ in $P$ and a convex cover $C$ of $P$ such that $|C| \leq 2|H|$, i.e. a 2-approximation, can be found in linear-time .
\end{theorem}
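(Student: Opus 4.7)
The plan is to decompose the pseudotriangle $P$ into two funnel polygons using a single interior chord, apply the exact linear-time algorithm from Theorem~\ref{thm:funnel} to each piece, and combine the outputs. Specifically, I would pick one convex vertex of $P$, say $v_s$, compute its visibility polygon $V(v_s)$ in $O(n)$ time, and choose a point $v_m$ in the interior of an edge of the opposite reflex chain $R_1$ that is visible from $v_s$, introducing $v_m$ as a Steiner vertex. Such an edge exists because any ray from $v_s$ heading ``across'' $P$ must exit through $R_1$, so $V(v_s) \cap R_1$ is nonempty.

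Adding the chord $v_s v_m$ partitions $P$ into two sub-polygons $P_A$ and $P_B$, each of which is a funnel polygon in the sense of Section~\ref{sec:funnel}: the chord is the unique convex edge, and two reflex chains (inherited from $R_1, R_2, R_3$) meet at the remaining convex vertex of $P$ on that side---$v_1$ for $P_A$ and $v_t$ for $P_B$. Since the interior angle at $v_s$ in $P$ is convex, the two sub-angles at $v_s$ in $P_A$ and $P_B$ are each convex; and since $v_m$ lies strictly interior to an edge of $R_1$, its interior angle in $P$ is $180^\circ$, so the chord splits it into two sub-angles each strictly less than $180^\circ$. All other vertex angles are inherited from $P$, so both sub-polygons are genuine funnels.

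I would then apply Theorem~\ref{thm:funnel} to $P_A$ and $P_B$ independently, obtaining hidden sets $H_A, H_B$ and convex covers $C_A, C_B$ with $|H_A| = |C_A|$ and $|H_B| = |C_B|$, each in linear time. The algorithm outputs $H$ as the larger of $H_A, H_B$ and $C = C_A \cup C_B$. That $C$ covers $P$ is immediate since $P_A \cup P_B = P$ and every convex piece lies inside $P$. That $H$ is hidden in $P$ follows because a straight segment can cross the chord $v_s v_m$ at most once, so any two points in the same sub-polygon have identical visibility in that sub-polygon and in $P$. The 2-approximation bound is then $|C| = |C_A| + |C_B| = |H_A| + |H_B| \leq 2\max(|H_A|,|H_B|) = 2|H|$, which, combined with Shermer's inequality $hs(P) \leq cc(P)$, yields $|C| \leq 2\,cc(P)$ and $|H| \geq hs(P)/2$.

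The main obstacle I anticipate is the clean verification that $P_A$ and $P_B$ are genuine funnels rather than degenerate pseudotriangles or polygons with multiple convex edges; this reduces to the angle analysis at $v_s$ and $v_m$ sketched above, together with the observation that interior vertices on the chains of $P_A, P_B$ remain strictly reflex because they are inherited from the strictly reflex chains of $P$. The Steiner-point placement of $v_m$ within an edge of $R_1$ shared with $V(v_s)$ can be implemented directly from the visibility polygon in $O(n)$ total time, and in any pathological case where the chosen $v_s$ does not yield a usable chord, we can simply retry from another of the three convex vertices.
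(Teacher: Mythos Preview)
Your approach is essentially the same as the paper's: split the pseudotriangle into two funnels with a single chord, apply Theorem~\ref{thm:funnel} to each, and output the union of the two covers together with the larger of the two hidden sets. The only difference is how the chord is chosen---the paper simply extends the first edge $e_1=v_1v_2$ of $R_1$ until it hits $R_2$ (arguing that a reflex chain lies on the exterior side of the line through any of its edges, so the ray cannot re-enter $R_1$ or $R_3$), which avoids the visibility-polygon computation and the fallback case you describe.
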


\begin{proof}

We show that by using the algorithm for funnel polygons in Section \ref{sec:funnel}, we can achieve a linear-time 2-approximation for pseudotriangles. The procedure is simple: Consider the edge $e_1 = v_1v_2$, and its extension. The extension of $e_1$ will land at some point $p$ on $R_2$ called $p$. It cannot land on either $R_1$ or $R_3$ otherwise the landed on chain would not be reflex, and it must intersect some part of the boundary of $P$. Let $p$ rest on edge $e_i = v_i v_{i+1}$. This partitions $P$ into two subpolygons $P_1 = v_2,v_3 ...v_i, p$ and $P_2 = p, v_{j+1}, ... v_1$. Both $P_1$ and $P_2$ are composed of two reflex chains and a convex edge, making them both funnel polygons. This is depicted in Figure \ref{fig:pseudo_example}.

\begin{figure}[ht]
    \centering
    \includegraphics[width=.3\textwidth]{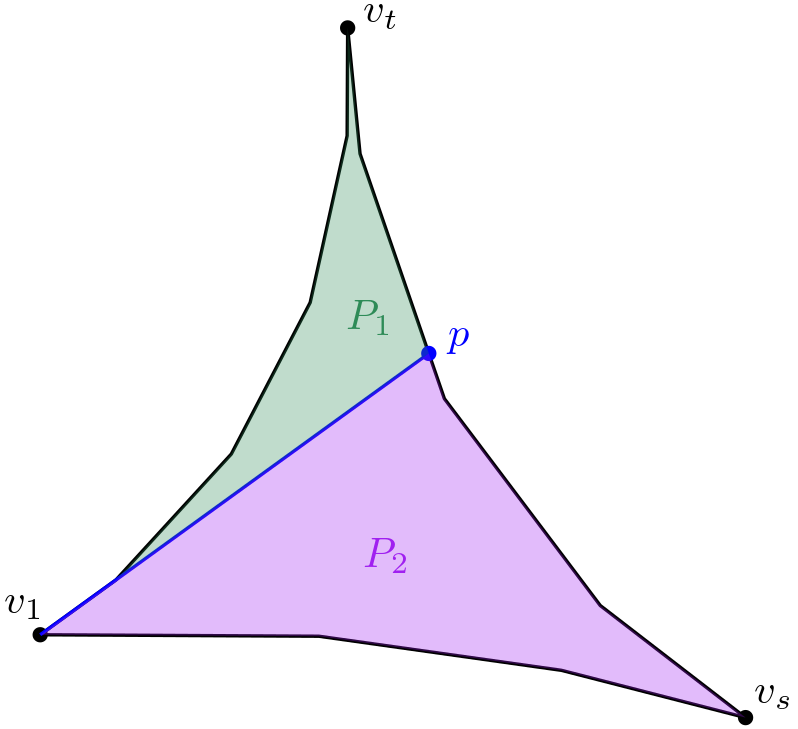}
    \caption{Pseudotriangle partition into funnel polygons that allow for a 2-approximation.}
    \label{fig:pseudo_example}
\end{figure}

From Theorem \ref{thm:funnel}, we know that we can obtain from $P_1$ a hidden set $H_1$ and a convex cover $C_1$. By symmetry, $P_2$ we get $H_2$ and $C_2$. As a hidden set, we take the larger, $H = \text{max}(H_1,H_2)$. As a convex cover, we take the union, $H = C_1 \cup C_2$. Because $|H_1| = |C_1|$ and $|H_2| = |C_2|$, we know that $|C| \leq 2 |H|$. This proves our claim, and since finding $p$ takes at most $O(n)$ time, as does each funnel, the total runtime is $O(n)$.
\end{proof}

Note, that while this algorithm does not achieve an exact answer, it is impossible to have an algorithm that only places hidden points on the edges of a pseudotriangle and that finds the exact maximum hidden set. The famous GFP (Godfried's Favorite Polygon) counterexample suffices here, and we depict it in Figure \ref{fig:godfried_example}. We give a convex cover of the boundary of size 3, which means that any hidden set constrained to the boundary must have size at most 3. We also give a hidden set of size 4, which means that 3 pieces cannot suffice for a convex cover.

\begin{figure}[ht]
    \centering
    \includegraphics[width=.3\textwidth]{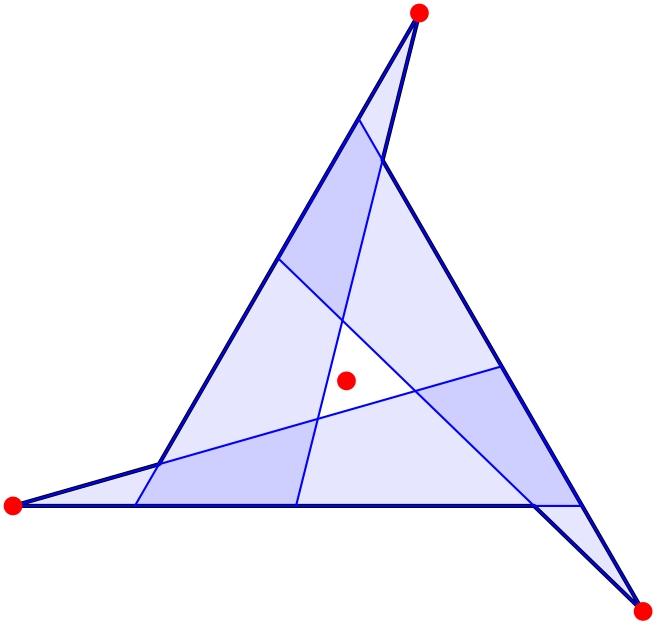}
    \caption{Godfried's favorite polygon, with a convex cover of the boundary of size 3 and a hidden set of size 4.}
    \label{fig:godfried_example}
\end{figure}

Therefore, any algorithm that places hidden points only on the boundary of a pseudotriangle cannot achieve an approximation factor better than $4/3$.

\section{Conclusion}

\begin{table}[ht]
    \centering
    \begin{tabular}{|c|c|c|} \hline
        Polygon class &  Inequality & Worst known ratio $(cc(P)/hs(P))$\\ \hline
  Histogram polygons \cite{browne2022collapsing} & $hs(P) \leq cc(P) \leq hs(P)$ & 1\\ \hline
   Spiral polygons\cite{browne2022collapsing} & $hs(P) \leq cc(P) \leq hs(P)$ & 1\\ \hline
   Funnel\cite{browne2023convex} & $hs(P) \leq cc(P) \leq hs(P)$ & 1 \\ \hline
        Monotone Mountains (or Rocking Chairs) & $hs(P) \leq cc(P) \leq  hs(P)$ &1 \\ \hline
        Convex Fans & $hs(P) \leq cc(P) \leq  hs(P)$ &1 \\ \hline
        Star-shaped polygons & $hs(P) \leq cc(P) \leq 2\cdot hs(P)$ & 3/2 \\ \hline
        Simple orthogonal polygons & $hs(P) \leq cc(P) \leq 2\cdot hs(P)$ & 4/3 \\ \hline
        Pseudotriangles\cite{browne2023convex} & $hs(P) \leq cc(P) \leq 2 \cdot hs(P)$ & 1 \\ \hline
         Polygons weakly visible from a convex edge & $hs(P) \leq cc(P) \leq 2\cdot hs(P)$ &1 \\ \hline
         Weak visibility polygons & $hs(P) \leq cc(P) \leq 4\cdot hs(P)$ &3/2\\ \hline
         
         Simple polygons & $hs(P) \leq cc(P) \leq 8\cdot hs(P)$ &3/2 \\ \hline
         Polygons with holes & $hs(P) \leq cc(P) \leq  h \cdot hs(P)$ &3/2  \\ \hline
         
    \end{tabular}
    \caption{The bounds between hidden set and convex cover for various polygon types ($h$ is the number of holes)}
    \label{tab:subclasstable}
\end{table}

We showed several extensions of previous positive and negative results for the minimum convex cover problem and the maximum hidden set problem. In addition to the algorithmic results, we give their implied combinatorial results in Table~\ref{tab:subclasstable}. The most pressing open problem with respect to both problems is whether or not there exists a constant factor approximation for convex cover in polygons with holes or there is additional hardness of approximation. It is clear from the $n^{epsilon}$ hardness for maximum hidden set that any attempt to use maximum hidden set as a bound as is done for the simple polygon case is doomed to fail unless $P=NP$. Additionally, due to the runtime for the $O(\log n)$ approximation being prohibitively large for implementation, it would also be interesting to see if there can be an improvement of the runtime for that.

With respect to hidden set, the most interesting problem is whether the problem is in fact in NP. Currently, it is not known whether the problem is in NP or if it is $\exists \mathbb{R}$-hard. Most proofs of $\exists \mathbb{R}$-hardness for geometric problems show, as a byproduct, that even polygons with rational coordinates can sometimes admit solutions which involve irrational points. However, there always exists solutions to the maximum hidden set problem in which all of the points have rational coordinates. This is because in order to be hidden, each point must have a corresponding neighborhood around it of nonzero radius for which none of the other hidden points can see any part of. This allows for the iterative transformation of any set of hidden points into one which has entirely rational coordinates. This discourages the chances that solutions to all $\exists \mathbb{R}$ formulas can be expressed as maximum hidden sets. However, there is no guarantee that these rational coordinates have fixed complexity with respect to the input coordinates, so it is also not entirely clear how to prove NP-membership.

\bibliographystyle{plain}
\bibliography{main}

\end{document}